\documentclass[11pt,oneside]{article}

\usepackage{amsthm, amsmath, amssymb, amsfonts, graphicx, epsfig, epstopdf}
\usepackage{enumerate}
\usepackage[colorlinks=true, pdfstartview=FitV, linkcolor=blue,
            citecolor=blue, urlcolor=blue]{hyperref}
\usepackage{url}
\makeatletter\def\url@leostyle{%
 \@ifundefined{selectfont}{\def\UrlFont{\sf}}{\def\UrlFont{\scriptsize\ttfamily}}} \makeatother\urlstyle{leo}

\setlength{\voffset}{-0.5in}
\setlength{\hoffset}{-0.5in}
\setlength{\textheight}{8.5in}
\setlength{\textwidth}{6in}
\newtheorem{theorem}{Theorem}[section]
\newtheorem{proposition}[theorem]{Proposition}
\newtheorem{lemma}[theorem]{Lemma}
\newtheorem{corollary}[theorem]{Corollary}

\theoremstyle{definition}
\newtheorem{definition}[theorem]{Definition}

\theoremstyle{remark}
\newtheorem{remark}[theorem]{Remark}

\def\bF{\mathbb{F}}
\def\bG{\mathbb{G}}

\def\bI{\mathbb{I}}

\newtheorem{assumption}[theorem]{Assumption}

\newcommand{\wtd}{\widetilde \delta}

\makeindex
\begin{document}

\title{Counterparty Risk and the Impact of Collateralization\\
in CDS Contracts}

\author{Tomasz R. Bielecki\footnote{TRB and IC acknowledge support from the NSF grant DMS-0908099}\\
\small{Department of Applied Mathematics,}\\[-0.1ex]
\small{Illinois Institute of Technology,}\\[-0.1ex]
\small{Chicago, 60616 IL, USA}\\[-0.1ex]
\url{bielecki@iit.edu}\\
\and
Igor Cialenco\footnotemark[\value{footnote}]\\[-0.1ex]
\small{Department of Applied Mathematics,}\\[-0.1ex]
\small{Illinois Institute of Technology,}\\[-0.1ex]
\small{Chicago, 60616 IL, USA}\\[-0.3ex]
\url{igor@math.iit.edu} \\
\and
Ismail Iyigunler \\
\small{Department of Applied Mathematics,}\\[-0.1ex]
\small{Illinois Institute of Technology,}\\[-0.1ex]
\small{Chicago, 60616 IL, USA}\\[-0.1ex]
\url{iiyigunl@iit.edu} \\[5ex]}

\date{First Circulated: April 11, 2011\\[1ex]
This version: \today}
\bigskip
\maketitle

\begin{abstract}
We analyze the counterparty risk embedded in CDS contracts, in presence of a bilateral margin agreement. First, we investigate the pricing of collateralized counterparty risk and we derive the bilateral Credit Valuation Adjustment (CVA), unilateral Credit Valuation Adjustment (UCVA) and Debt Valuation Adjustment (DVA). We propose a model for the collateral by incorporating all related factors such as the thresholds, haircuts and margin period of risk. We derive the dynamics of the bilateral CVA in a general form with related jump martingales. We also introduce the Spread Value Adjustment (SVA) indicating the counterparty risk adjusted spread. Counterparty risky and the counterparty risk-free spread dynamics are derived and the dynamics of the SVA is found as a consequence. We finally employ a Markovian copula model for default intensities and illustrate our findings with numerical results.
\end{abstract}
\newpage
\tableofcontents
\newpage
\section{Introduction}

Not very long after the collapse of prestigious institutions like Long-Term Capital Management, Enron and Global Crossing, the financial industry has again witnessed dramatic downfalls of financial institutions such as Lehman Brothers, Bear Stearns and Wachovia. These recent collapses have stressed out the importance of measuring, managing and mitigating counterparty risk appropriately.

Counterparty risk is defined as the risk that a party in an over-the-counter (OTC) contract will default and will not be able to honor its contractual obligations. Since the exchange-traded derivative contracts are subject to clearing by the exchange, counterparty risk arises from OTC derivatives only. The main challenge in the counterparty risk assessment and hedging is that the exposures of OTC derivatives are stochastic and involve dependencies and systemic risk factors such as wrong way risks; the additional level of complexity is introduced by risk mitigation techniques such as collateralization and netting. Therefore, one needs to model potential future exposures and to price the counterparty risk appropriately according to margin agreements that underlie the collateralization procedures.

Brigo and Capponi \cite{BrCapp} focuses on a Gaussian copula model and study bilateral counterparty risk using a CIR++ intensity model. Recently, Brigo et al. \cite{BCPP2011} extended this framework to the collateralized contracts with an application to interest rate swaps under bilateral margin agreements. Hull and White \cite{HullWhite} propose a static copula model and study unilateral counterparty risk on credit default swaps. Bielecki et al. \cite{Bielecki2011} study unilateral counterparty risk with the absence of any margin agreements. Assefa et al. \cite{Assefa2011} consider the portfolio of credit default swaps under Markovian copula model and consider only fully collateralized contracts. Jarrow and Yu \cite{JarrowYu} deal with the counterparty risk by using a dependence structure based on the default intensities of the counterparties. This approach, that also addresses the contagion risk issue, is considered in Leung and Kwok \cite{LeungKwok2005}. All these works mentioned above employ the reduced form modeling technology. However, structural models have also been used to model counterparty risk. Good examples of this approach are papers by Lipton and Sepp \cite{LiptonSepp} and Blanchet-Scalliet and Patras \cite{BCPatras}. Moreover, Stein and Lee \cite{SteinLee} study and illustrate credit valuation adjustment computations in the fixed income markets.

Various issues regarding the simulation of credit valuation adjustments under margin agreements are studied by Pykhtin in \cite{Pykhtin2009}. Furthermore, Cesari et al. \cite{Cesari2010} and Gregory \cite{Gregory2009} provide thorough treatments of the methods and the applications used in practice regarding the counterparty risk.

In this paper, we analyze the counterparty risk in a Credit Default Swap (CDS) contract in presence of a bilateral margin agreement. There are three risky names associated with the contract: the reference entity, protection seller (the counterparty) and the protection buyer (the investor). Contrary to the common approach which starts with defining the Potential Future Exposure (PFE) and derives the Credit Valuation Adjustment (CVA) as the price of the counterparty risk, we find the CVA as the difference between the market values of a counterparty risk-free and a counterparty risky CDS contracts and deduct the relevant credit exposures accordingly. We consider the problem of bilateral counterparty risk assessment; that is, we consider the situation where the two counterparties  of the CDS contract, i.e. the investor and the counterparty, are subject to default risk in a counterparty risky CDS contract.

We focus on the collateralized contracts, where there is a bilateral margin agreement is in force as a vital risk mitigation tool, and it requires the counterparty and the investor to post collateral in case their exposure exceeds specific threshold values. We propose a model for the collateral by incorporating all related factors, such as thresholds, margin period of risk and minimum transfer amount. Then, we derive the dynamics of the bilateral CVA which is essential for dynamic hedge of the counterparty risk. We also compute the decomposition of the fair spread for the CDS, and we analyze so called Spread Value Adjustment (SVA). Essentially, SVA represents the adjustment to be made to the fair spread to incorporate the counterparty risk into the CDS contract.

Using the bilateral CVA formula, we derive relevant formulas for assessment of credit exposures, such as PFE, Expected Positive Exposure (EPE) and Expected Negative Exposure (ENE).

In our model, the dependence between defaults and the wrong way risk is represented in a Markovian copula framework that accounts for simultaneous defaults among the three names represented in a CDS contract. In this way, our model takes broader systemic risk factors into account and quantifies the wrong way risk and the double defaults in a tangible manner.

This paper is organized as follows. In Section \ref{sec:CPrisk}, we first define the dividend processes regarding the counterparty risky and the counterparty risk-free CDS contract in case of a bilateral margin agreement. We also define the CVA, UCVA and the DVA terms as well as the credit exposures such as PFE, EPE and ENE. We then prove the dynamics of the CVA in Section \ref{sec:CVAdynamics}. Moreover, we find the fair spread adjustment term and its dynamics in Section \ref{sec:SVA}. In Section \ref{sec:Numerics}, we simulate  the collateralized exposures and the CVA using our Markovian copula model of default dependence.

\section{Pricing Counterparty Risk: CVA, UCVA and DVA}\label{sec:CPrisk}

We consider a standard CDS contract, and we label by $1$ the reference name, by $2$ the counterparty, and by $3$ the investor. Each of the three names may default before the maturity of the CDS contract, and we denote by ${\tau }_{{1}},$ ${\tau }_{2}$ and ${\tau }_{3}$ their respective default times. These times are modeled as non-negative random variables given on a underlying probability space $\left( \Omega ,\mathcal{G},\mathbb{Q}\right)$. We let $T$ and $\kappa$ to denote the maturity and the spread of our CDS contract, respectively. We assume the recovery at default covenant; that is, we assume that recoveries are paid at times of default.

We introduce right-continuous processes $H_{t}^{i}$ by setting $H_{t}^{i}=${$ \mathbb{I}$}$_{\left\{ {\tau }_{i}\leq t\right\} }$ and we denote by $ \mathbb{H}^{i}$ the associated filtrations so that $\mathcal{H}_{t}^{i}=\sigma \left( H_{u}^{i}:u\leq t\right) $ for $i=1,2,3.$

We assume that we are given a market filtration $\mathbb{F}$, and we define the enlarged filtration
$\mathbb{G}=\mathbb{F}{\vee }\mathbb{H}^{1}{\vee }\mathbb{H}^{2}{\vee }
\mathbb{H}^{3}$, that is $\mathcal{G}_{t}=\sigma \left( \mathcal{F}_{t}\cup
\mathcal{H}_{t}^{1}\cup \mathcal{H}_{t}^{2}\cup \mathcal{H}_{t}^{3}\right)$ for any $t\in \mathbb{R}_{+}.$ For each $t\in \mathbb{R}_{+}$ total information available at time $t$ is captured by the $\sigma$-field $\mathcal{G}_{t}.$ In particular, processes $H^{i}$ are $\mathbb{G}$-adapted and the random times ${\tau }_{{i}}$ are $\mathbb{G}$-stopping times for $i=1,2,3.$

Next, we define the first default time as the minimum of $\tau_1,$ $\tau_2$ and $\tau_3$: $\tau=\tau_1\wedge\tau_2\wedge\tau_3$; the corresponding indicator process is $H_t=\bI_{\{ \tau \leq t\} }$. In addition, we define the first default time of the two counterparties: $\hat{\tau}=\tau_2\wedge\tau_3$, and the corresponding  indicator process $\hat{H_t}=\bI_{\{ \hat{\tau} \leq t\} }$.

We also denote by $B_t$ the savings account process, that is
\begin{equation*}
B_{{t}}=e^{\int_{0}^{t}r_{s}ds},
\end{equation*}%
where the $\mathbb{F}$-progressively measurable process $r$ models the short-term interest rate. We also postulate that $\mathbb{Q}$ represents a martingale measure associated with the choice of the savings account $B$ as a discount factor (or numeraire).

\subsection{Dividend Processes and Marking-to-Market}

In this paper, all cash flows and the prices are considered from the perspective of the investor.

We start by introducing the \textsl{counterparty-risk-free} dividend process $D$, which describes all cash flows associated with a counterparty-risk-free CDS contract;\footnote{We shall sometimes refer to such contract as to the clean contract.} that is, $D$  does not account for the counterparty risk.

\begin{definition}
The cumulative dividend process $D$ of a counterparty risk-free CDS contract maturing at time $T$ is given as,
\begin{equation}
D_{t}=\int_{\left] 0,t\right] }\delta _{u}^{1}dH_{u}^{1}-\kappa \int_{\left]
0,t\right] }\left( 1-H_{u}^{1}\right) du,  \label{d1}
\end{equation}
for every $t\in \left[ 0,T\right] $, where $\delta ^{1}:\left[ 0,T\right] \rightarrow \mathbb{R}$ is an $\mathbb{F}$-predictable processes.
\end{definition}

Process $\delta ^{1}$ represents the loss given default (LGD);
that is $\delta ^{1}=1-R^1_t,$ where $R^1$ is the fraction of the nominal that is recovered in case of the default of the reference name. We assume unit nominal, for simplicity.

The ex-dividend price processes of the counterparty risk-free CDS contract, say $S$, describes the current market value, or the Mark-to-Market (MtM) value of this contract.

\begin{definition}
The ex-dividend price process $S$ of a counterparty risk-free CDS contract maturing at time $T$ is given by,%
\begin{equation}
S_{{t}}=B_{{t}}{\mathbb{E}}_{\mathbb{Q}}\left( \left. \int_{\left] {t},T
\right] }B_{u}^{-1}dD_{u}\right\vert \mathcal{G}_{{t}}\right), \ t\in [0,T]. \label{pr1}
\end{equation}
\end{definition}
\begin{remark}
Accordingly, we define the cumulative (dividend) price process, say $\widehat{S}$, of a counterparty risk-free CDS contract as
\begin{equation*}
\widehat{S}_{t}=S_{{t}}+B_{{t}}\int_{%
\left] {0},t\right] }B_{u}^{-1}dD_{u}, \ t\in [0,T].
\end{equation*}
\end{remark}

Now, we are in  position to define the dividend process $D^{C}$ of a counterparty risky CDS contract, that is the CDS contract that accounts for the counterparty risk associated with the two counterparties of the contract.

\begin{definition}
The dividend process $D^{C}$ of a $T$-maturity counterparty-risky CDS contract is given as
\begin{align}
D_{t}^{C} &=\int_{\left] 0,t\right] }C_{u}dH_{u}+\int_{\left] 0,t\right] }\widetilde{\delta }_{u}^{1}\left( 1-H_{u-}\right) dH_{u}^{1}
+\int_{\left] 0,t\right] }\widetilde{\delta }_{u}^{2}\left( 1-H_{u-}\right) dH_{u}^{2}\label{d2}  \nonumber \\
& \ \ \ +\int_{\left] 0,t\right] }\widetilde{\delta }_{u}^{3}\left( 1-H_{u-}\right)dH_{u}^{3}+\int_{\left] 0,t\right] }\widetilde{\delta }_{u}^{4}\left(1-H_{u-}\right) d[ H^{2},H^{3}] _{u} \\
& \ \ \ +\int_{\left] 0,t\right] }\widetilde{\delta }_{u}^{5}\left( 1-H_{u-}\right) d[ \hat{H},H^{1}] _{u}-\kappa \int_{\left] 0,t\right] }\left( 1-H_{u}\right) du, \ \ \ t\in [0,T],
\notag
\end{align}
where $\widetilde{\delta }^{i}:\left[ 0,T\right] \rightarrow \mathbb{R}$ is an $\mathbb{F}$-predictable processes for $i=1,\ldots ,5$ and $C:\left[ 0,T\right] \rightarrow \mathbb{R}$ is an $\bF$-predictable process representing the collateral amount kept in the margin account.
\end{definition}

Margin account is a contractual tool that supplements  the CDS contract so to reduce potential losses that may be incurred by one of the counterparties in case of the default of the other counterparty, while the CDS contract is still alive. For the detailed description of the mechanics of the collateral formation in the margin account we refer to Section \ref{marginSec} (see also \cite{Bielecki2011a}).

In case of any credit event, associated with the collateralized CDS contract, the first cash flow that takes place is the ``transfer'' of the collateral amount; for example, in case  when the underlying entity defaults at time ${t=\tau =\tau }_{1}$,  (before any of the counterparties defaults) the collateral in the margin account is acquired by one of the counterparties (depending on the sign of $C_\tau$). Thus, consistently with the conventions of so called close-out cash flows (cf. \cite{Bielecki2011a}) we define $\wtd^i$s as follows:

\begin{itemize}

\item We set $\widetilde{\delta }_{{t}}^{1}=\delta _{{t}}^{1}-C_{{t}}$. This is because after the collateral transfer the counterparty pays the remaining recovery amount $\delta _{{t}}^{1}-C_{{t}}$.

\item At time ${t=\tau =\tau }_{2}$, when the counterparty defaults, then, after the collateral transfer takes place, if the uncollateralized mark-to-market (MtM) of the CDS contract, that is $S_t+ \mathbb{I}_{\{t = \tau_1 \}}\delta_t^1-C_t$ \footnote{The term ${\mathbb{I}}_{\left\{ t{=\tau }_{1}\right\} }\delta _{{t}}^{1}$ represents the exposure in case when the counterparty and the underlying entity default simultaneously.}, is negative,
    the investor closes out the position by paying the defaulting counterparty the uncollateralized MtM. If the uncollateralized MtM is positive, the investor closes out the position and receives a fraction $R_{2}$ of the uncollateralized MtM from the counterparty. Therefore, in this case, the close-out payment is defined as, \begin{equation*}
    \widetilde{\delta }_{{t}}^{2}=R_{2}\left( S_t+{\mathbb{I}}_{\left\{ t{=\tau }_{1}\right\} }\delta _{{t}}^{1}-C_{{t}}\right) ^{+}-\left( S_t+\mathbb{I}_{\{ t=\tau_1\}}\delta_t^1-C_t\right)^{-}.
    \end{equation*}

\item In case of investor default, that is at time ${t=\tau =\tau }_{3}$, if the uncollateralized MtM is positive, that is if $S_t+\mathbb{I}_{\{ t=\tau_1\}}\delta_t^1-C_t>0$, the counterparty closes out the position by paying the uncollateralized MtM. If the uncollateralized MtM is negative, the counterparty receives a fraction $R_{3}$ of the uncollateralized MtM. Hence, the close-out payment is defined as,
    \begin{equation*}
    \widetilde{\delta }_{{t}}^{3}=\left( S_{{t}}+{\mathbb{I}}_{\left\{ t{=\tau } _{1}\right\} }\delta _{{t}}^{1}-C_{{t}}\right) ^{+}-R_{3}\left( S_t+\mathbb{I}_{\{ t=\tau_1\}}\delta_t^1-C_t\right)^{-}.
    \end{equation*}

\item If the investor and the counterparty default simultaneously at time ${t=\tau =\tau }_{2}={\tau }_{{3}}$, and if the uncollateralized MtM negative, the counterparty receives a fraction $R_{3}$ of the uncollateralized MtM; however, if the uncollateralized MtM is positive, the investor receives a fraction $R_{2}$ of the uncollateralized MtM. Therefore, we set,
    \begin{equation*}
    \widetilde{\delta }_{{t}}^{4}=-\left( S_t+\mathbb{I}_{\{ t=\tau_1\}}\delta_t^1-C_t\right) .
    \end{equation*}

\item If $t=\tau=\hat{\tau}=\tau_1$, that is when the investor or the counterparty default simultaneously with the reference entity, investor receives a fraction $R_{2}$ of the remaining recovery amount, $\left( \delta _{{t}}^{1}-C_{{t}}\right) ^{+}$, when the counterparty defaults. Likewise, if the investor defaults, the counterparty receives a portion $R_{3}$ of the remaining recovery amount, $\left( \delta _{{t}}^{1}-C_{{t}}\right) ^{-}$. The close-out payment in joint defaults including the underlying entity has the form,
    \begin{equation*}
    \widetilde{\delta }_{{t}}^{5}=-\left( \delta _{{t}}^{1}-C_{{t}}\right) .
    \end{equation*}
\end{itemize}

We are now ready to define the  price processes associated with a counterparty risky CDS contract.

\begin{definition}
The ex-dividend price process $S^{C}$ of a counterparty risky CDS contract maturing at time $T$ is given as,
\begin{equation}
S_{t}^{C}=B_{t}{\mathbb{E}}_{\mathbb{Q}}\left( \left. \int_{\left] t,T\right] }B_{u}^{-1}dD_{u}^{C}\right\vert \mathcal{G}_{t}\right),\ \ t\in [0,T].  \label{pr2}
\end{equation}
The cumulative price process $\widehat{S}^{C}$ of a counterparty risky CDS contract is given by,
\begin{equation*}
\widehat{S}_{{t}}^{C}=S_{{t}}^{C}+B_{{t}}\int_{\left] {0},t\right]}B_{u}^{-1}dD_{u}^{C},\ \ t\in [0,T].
\end{equation*}

\end{definition}

\subsubsection{Bilateral Margin Agreement and Collateral Modeling}\label{marginSec}

Collateralization is one of the most important techniques of mitigation of counterparty risk, and modeling the collateral process (also termed the margin call process) is of great practical importance (cf. \cite{Algorithmics2009}). In this section, we propose a model to describe formation of the required collateral amount at every time $t\in [0,T]$, with regard to bilateral margin agreements.\footnote{ A bilateral margin agreement is a contractual agreement governed by a Credit Support Annex (CSA), which is a regulatory part of the ISDA Master Agreement (cf. \cite{ISDA2005}, page 34) describing the use of collateral which is either directly transferred between counterparties or held by a third party such as a clearing house (cf. \cite{ISDA2005}, page 68).} The following contractual parameters are essential in bilateral margin agreements and they are precisely defined in CSA documents.

\textit{Margin Period of Risk}: The margin period of risk consists of several components. Firms usually monitor their exposure on a periodic basis and receive or make appropriate margin calls considering other collateral parameters. The frequency of this process is called the margin call period and it is typically one day. This period includes a number of phases such as computation, negotiation, verification and settlement of the margin call also with possible disputes during the process. According to the ISDA Master Agreement, in case of a potential default, the defaulting counterparty enters into a short forbearance period to recover from a potential default event where the collateral is pledged by the other firm. This time interval is called the cure period. If the default is uncured, liquidation process of the collateral assets starts (cf. \cite{ISDA2010a}, page 26). This period mainly depends on the collateral portfolio selection, precise assessment of asset correlation and concentration risks as well as their liquidity, volatility and credit quality parameters. Therefore, the time interval from the last margin call plus the cure period until all collateral assets are liquidated and the resulting market risk is re-hedged is called the margin period of risk (cf. \cite{Pykhtin2009}); we shall denote it  as $\Delta $.

\textit{Threshold}: The threshold is the unsecured credit exposure that both counterparties are willing to tolerate without holding any collateral. Bilateral margin agreements specify thresholds for both counterparties and require them to post collateral whenever the current credit exposure exceeds their thresholds (cf. \cite{ISDA2010a}, page 11). These threshold amounts are defined in the related CSA documents and often set to react to the changes in the credit rating of the counterparties (cf.\cite{ISDA2010}, page 13). We will denote the counterparty and the investor's thresholds by $\Gamma _{cpty}$ and $\Gamma _{inv}$, respectively. Since we are doing our analysis from the point of view of the investor, we set the counterparty's threshold $\Gamma _{cpty}$ to be a non-negative constant, and the investor's threshold $\Gamma _{inv}$ to be a non-positive constant.

\textit{Minimum Transfer Amount}: Margin calls for amounts smaller than the $MTA$ are not allowed. The purpose of the $MTA$ is to prevent calling small amounts; this is done so to avoid the operational costs involved in small transactions (cf. \cite{ISDA2010a}, page 13). Minimum transfer amount is usually the same for the investor and the counterparty.

\textit{Re-hypothecation Risk and Segregation}: Collateral assets can be reused as a funding source on another derivatives transactions. This is known as rehypothecation. An investor (counterparty) can rehypothecate the collateral received from the counterparty (investor) by selling or lending out the assets to a third party, which dramatically increases the credit risk associated with the collateral. Elimination of this rehypothecation risk is essentially done by segregating the collateral to a third party, such as a clearing house. This procedure carries certain funding risks, since the counterparties will not be getting funding benefit from the collateral posted, so they need to raise funding in connection with their transactions using their own funding rates.

The construction of the collateral process presented below builds upon the construction given in \cite{Bielecki2011a}.
Let us denote the margin call dates by $0<t_1 < \dots < t_n< T$. On each margin call date, if the exposure is above the counterparty's threshold, $\Gamma_{cpty}$, and the difference between the current exposure and the collateral amount is greater than the MTA the counterparty posts collateral and updates the margin account; otherwise, no collateral exchange takes place since the transfer amount is less than the MTA. Likewise, the investor delivers collateral on each margin call date, if the exposure is below investor's threshold, $\Gamma_{inv}$, and the difference between the current exposure and the collateral amount is greater than MTA (cf. \cite{ISDA2005}, pages 52-56). Note that in this model a collateral transfer are allowed only if it is greater than the $MTA$ amount.

In accordance with the above discussion the collateral process is modeled as follows:

We set $C_0=0$. Then, for $i=1,2,\ldots,n$ we postulate that
\begin{align*}
C_{t_i+} &= {\mathbb{I}}_{\{S_{t_i}-\Gamma_{cpty}-C_{t_i}>MTA\}}(S_{t_i}-\Gamma_{cpty}-C_{t_i}) \\
& \ \ \ +{\mathbb{I}}_{\left\{ S_{t_i}-\Gamma_{inv}-C_{t_i}<-MTA\right\} }( S_{t_i}-\Gamma _{inv}-C_{t_i}) + C_{t_i} ,
\end{align*}
on the set $\left\{ t_i<{\hat{\tau} }\right\}$, and it is constant on interval $(t_i,t_{i+1}]$. Moreover, $C_{t} = C_{\hat{\tau}}$ on the set $\{ \hat{\tau} < t < \hat{\tau} + \Delta \}$.

Observe that the collateral increments at each margin call date $t_i<{\hat{\tau}}$ can now be represented as,
\begin{align*}
\Delta C_{t_i}:&=C_{t_i+} - C_{t_i}  \\
&= {\mathbb{I}}_{\{S_{t_i}-\Gamma_{cpty}-C_{t_i}>MTA\}}(S_{t_i}-\Gamma_{cpty}-C_{t_i}) \\
& \ \ \ +{\mathbb{I}}_{\left\{ S_{t_i}-\Gamma_{inv}-C_{t_i}<-MTA\right\} }( S_{t_i}-\Gamma _{inv}-C_{t_i})
\end{align*}

One should also note that the collateral construction given in \cite{Pykhtin2009}, which reads
\begin{equation*}
C_{t}={\mathbb{I}}_{\left\{ S_{{t}}>\Gamma _{cpty}+MTA\right\} }\left( S_{{t}
}-\Gamma _{cpty}\right) +{\mathbb{I}}_{\left\{ S_{{t}}<\Gamma
_{buy}-MTA\right\} }\left( S_{{t}}-\Gamma _{inv}\right),
\end{equation*}
allows intermediate collateral updates that are smaller than MTA. In our case, we avoid this intricacy by defining the collateral process as a left-continuous, piecewise constant process.

\begin{remark}

The collateral construction described above is cash based. The net cash value of the collateral portfolio is determined using haircuts.

The haircut (or, valuation percentage) describes the amount that will be charged from a particular collateral asset. Effective value of the collateral asset is determined by subtracting the mark-to-market value of the asset multiplied by an appropriate haircut (cf. \cite{ISDA2005}, page 67). Therefore, the haircuts applied to collateral assets should reflect the market risk on those assets. The haircut is defined as a percentage, where 0\% haircut implies complete mark-to-market value of the asset to be used as collateral without any discounting. Government securities having high credit rating such as Treasury bonds and Treasury bills are usually subjected to 1\% to 10\% haircut, while for more risky, volatile or illiquid securities, such as a stock option, the haircut might be as high as 30\%. The only asset that is not subjected to any haircut as collateral is cash where usually both parties mutually agree the use of an overnight index rate (cf. \cite{ISDA2010a}, page 27). The term valuation percentage is also used in Credit Support Annex (CSA) documents. The valuation percentage defines the amount that the market value of the asset is multiplied by to yield the effective collateral value of the asset. Hence, we have $\textrm{VP}_{t}=1-h_{t}$, where $\textrm{VP}_{t}$ is the valuation percentage and $h_{t}$ is the total haircut applied to the collateral assets at time $t$. We will not go into the details of the formation of the haircut since it is either pre-determined in the CSA documents or related to market risk measures such as VaR of the collateral assets. (cf. \cite{ISDA2005}, page 68). The main purpose of the haircut is to mitigate amortization or depreciation in the collateral asset value at the time of a default and in the margin period of risk. Moreover, the haircut should be updated as frequently as possible to reflect the changes in the volatility or liquidity of the collateral assets (cf. \cite{ISDA2005}, page 63).

Therefore, the total value of the collateral portfolio at time $t$ is equal to $(1+h_{t})C_{t}$, where $h_{t}$ is the appropriate haircut applied to the collateral portfolio.

\end{remark}

\subsection{Bilateral Credit Valuation Adjustment}

In this section, we shall compute the CVA on a CDS contract, subject to a bilateral margin agreement.

\begin{definition}
The bilateral Credit Valuation Adjustment process on a CDS contract maturing at time $T$ is defined as
\begin{equation}
\text{CVA}_{t}= S_{t}-S_{t}^{C},
\label{CVAdef}
\end{equation}
for every $t\in [0,T]$.
\end{definition}

We now present an alternative representation for the bilateral CVA, which is convenient for computational purposes.

\begin{proposition}\label{prop1}
The bilateral CVA process on a CDS contract maturing at time $T$ satisfies
\begin{align}
\text{CVA}_{t} &=B_{t}{\mathbb{E}}_{\mathbb{Q}}\left( \left. {\mathbb{I}}
_{\left\{ t<{\tau =\tau }_{2}\leq T\right\} }B_{{\tau }}^{-1}\left(
1-R_{2}\right) \left( S_{{\tau }}+{\mathbb{I}}_{\left\{ {\tau =\tau }
_{1}\right\} }\delta _{{\tau }}^{1}-C_{{\tau }}\right) ^{+}\right\vert
\mathcal{G}_{t}\right)  \notag \\
& \ \ \ -B_{t}{\mathbb{E}}_{\mathbb{Q}}\left( \left. {\mathbb{I}}_{\left\{ t<{\tau
=\tau }_{3}\leq T\right\} }B_{{\tau }}^{-1}\left( 1-R_{3}\right) \left( S_{{
\tau }}+{\mathbb{I}}_{\left\{ {\tau =\tau }_{1}\right\} }\delta _{{\tau }
}^{1}-C_{{\tau }}\right) ^{-}\right\vert \mathcal{G}_{t}\right), \label{CVA}
\end{align}
for every $t\in [0,T]$.
\end{proposition}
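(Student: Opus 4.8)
The plan is to compute from the definition \eqref{CVAdef}, $\text{CVA}_t = S_t - S_t^C$, together with the pricing formulas \eqref{pr1}--\eqref{pr2}:
\[
\text{CVA}_t = B_t\,{\mathbb{E}}_{\mathbb{Q}}\!\left(\int_{]t,T]} B_u^{-1}\,\big(dD_u - dD_u^C\big)\,\Big|\,\mathcal{G}_t\right).
\]
First I would observe that $D^C$ freezes at the first default: every integrand in \eqref{d2} carries the factor $1-H_{u-}$, while $H$ (and hence the coupon leg $-\kappa\int(1-H_u)\,du$) is constant on $[\tau,\infty)$, so $\int_{]t,T]}B_u^{-1}dD_u^C=\int_{]t,\tau\wedge T]}B_u^{-1}dD_u^C$. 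For the clean leg I would split $\int_{]t,T]}=\int_{]t,\tau\wedge T]}+\int_{]\tau\wedge T,T]}$ and note that on $]t,\tau[$ one has $H_u=H_u^1=0$, hence $dD_u=-\kappa\,du=dD_u^C$ there; the two pieces over $]t,\tau\wedge T]$ therefore cancel except for the jump at $\tau$, leaving
\[
\int_{]t,T]}B_u^{-1}\big(dD_u-dD_u^C\big)={\mathbb{I}}_{\{t<\tau\leq T\}}\,B_\tau^{-1}\big(\Delta D_\tau-\Delta D_\tau^C\big)+\int_{]\tau\wedge T,T]}B_u^{-1}dD_u .
\]

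Next I would handle the tail term by conditioning first on $\mathcal{G}_{\tau\vee t}$: since $\tau$ is a $\mathbb{G}$-stopping time, the definition \eqref{pr1} evaluated at $\tau$ gives ${\mathbb{E}}_{\mathbb{Q}}(\int_{]\tau,T]}B_u^{-1}dD_u\mid\mathcal{G}_\tau)=B_\tau^{-1}S_\tau$ on $\{t<\tau\leq T\}$, whence
\[
\text{CVA}_t=B_t\,{\mathbb{E}}_{\mathbb{Q}}\!\left({\mathbb{I}}_{\{t<\tau\leq T\}}\,B_\tau^{-1}\big(\Delta D_\tau-\Delta D_\tau^C+S_\tau\big)\,\Big|\,\mathcal{G}_t\right)
\]
(the computation being carried out on $\{\tau>t\}$). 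It then remains to evaluate the bracket on each of the seven disjoint events recording which of $\tau_1,\tau_2,\tau_3$ attain the minimum $\tau$. Reading off the jumps of $H^i$, of $[H^2,H^3]$ and of $[\hat H,H^1]$ one has $\Delta D_\tau={\mathbb{I}}_{\{\tau=\tau_1\}}\delta_\tau^1$ and
\[
\Delta D_\tau^C=C_\tau+{\mathbb{I}}_{\{\tau=\tau_1\}}\widetilde{\delta}_\tau^1+{\mathbb{I}}_{\{\tau=\tau_2\}}\widetilde{\delta}_\tau^2+{\mathbb{I}}_{\{\tau=\tau_3\}}\widetilde{\delta}_\tau^3+{\mathbb{I}}_{\{\tau=\tau_2=\tau_3\}}\widetilde{\delta}_\tau^4+{\mathbb{I}}_{\{\tau=\tau_1=\tau_2\}\cup\{\tau=\tau_1=\tau_3\}}\widetilde{\delta}_\tau^5 .
\]
The key simplifications are that $S_\tau=0$ on $\{\tau=\tau_1\}$ (the clean CDS has no residual value once the reference name has defaulted), so that $S_\tau+{\mathbb{I}}_{\{\tau=\tau_1\}}\delta_\tau^1-C_\tau$ is in every case exactly the uncollateralized exposure $X$ used in the close-out conventions, and that $\widetilde{\delta}_\tau^1=\delta_\tau^1-C_\tau$. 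Substituting the definitions of the $\widetilde{\delta}^i$, absorbing the collateral-transfer term $C_\tau$, and using $x=x^+-x^-$, all $R$-independent contributions cancel and in each case the bracket collapses to
\[
{\mathbb{I}}_{\{\tau=\tau_2\}}(1-R_2)\big(S_\tau+{\mathbb{I}}_{\{\tau=\tau_1\}}\delta_\tau^1-C_\tau\big)^+-{\mathbb{I}}_{\{\tau=\tau_3\}}(1-R_3)\big(S_\tau+{\mathbb{I}}_{\{\tau=\tau_1\}}\delta_\tau^1-C_\tau\big)^- ;
\]
in particular it vanishes when $\tau=\tau_1$ and $\tau\notin\{\tau_2,\tau_3\}$ (a reference-only default carries no counterparty risk), and on $\{\tau=\tau_2=\tau_3\}$ it correctly produces both contributions. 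Inserting this back and separating the two indicator pieces yields \eqref{CVA}.

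The main obstacle is this last, bookkeeping-heavy step: on the tied-default events up to all five legs $\widetilde{\delta}^1,\dots,\widetilde{\delta}^5$ fire simultaneously, and one must verify that the deliberately asymmetric close-out definitions of $\widetilde{\delta}^2,\dots,\widetilde{\delta}^5$ — combined with $\widetilde{\delta}^1=\delta^1-C$, the identity $x=x^+-x^-$, and the vanishing of $S_\tau$ on $\{\tau=\tau_1\}$ — really do cancel the collateral amount and every full-recovery term in all seven cases, leaving precisely $(1-R_2)(\cdot)^+$ from a counterparty default and $-(1-R_3)(\cdot)^-$ from an investor default. The remaining ingredients (the coupon cancellation on $]t,\tau[$ and the conditioning step replacing the post-$\tau$ clean dividends by $S_\tau$) are routine.
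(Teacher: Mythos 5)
Your proposal is correct and takes essentially the same route as the paper's proof: both reduce the computation to the jump of $D^C$ versus $D$ at the first default time $\tau$, use the tower property at $\mathcal{G}_\tau$ to replace the post-$\tau$ clean dividends by $S_\tau$, and then exploit ${\mathbb{I}}_{\{\tau=\tau_1\}}S_\tau=0$ together with $x=x^+-x^-$ to collapse the close-out payments case by case (the paper organizes this as a full expansion of $S^C_t$ followed by a partition-of-unity identity, which is only a cosmetic difference). Your cancellation claims on the tied-default events do check out, and your explicit restriction to $\{\tau>t\}$ is, if anything, slightly more careful than the paper's own treatment.
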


\begin{proof}
We begin by observing that
\begin{equation*}
\int_{\left] {t},T\right] }B_{u}^{-1}\widetilde{\delta }_{u}^{i}\left(
1-H_{u-}\right) dH_{u}^{i}=B_{{\tau
}}^{-1}\widetilde{\delta }_{{\tau }}^{i}{\mathbb{I}}_{\left\{ t<{\tau =\tau }
_{i}\leq T\right\} },
\end{equation*}
for $i=1,2,3$. Consequently,
\begin{align}
\int_{\left] {t},T\right] }B_{u}^{-1}dD_{u}^{C} &=B_{{\tau }}^{-1}
\widetilde{\delta }_{{\tau }}^{1}{\mathbb{I}}_{\left\{ t<{\tau =\tau }
_{1}\leq T\right\} }+B_{{\tau }}^{-1}\widetilde{\delta }_{{\tau }}^{2}{
\mathbb{I}}_{\left\{ t<{\tau =\tau }_{2}\leq T\right\} } \notag \\
& \ \ \ +B_{{\tau }}^{-1}\widetilde{\delta }_{{\tau }}^{3}{\mathbb{I}}_{\left\{ t<{\tau =\tau }
_{3}\leq T\right\} } +B_{{\tau }}^{-1}\widetilde{\delta }_{{\tau }}^{4}{\mathbb{I}}_{\left\{ t<{
\tau =\tau }_{2}={\tau }_{3}\leq T\right\} } \notag \\
& \ \ \ +B_{{\tau }}^{-1}\widetilde{ \delta }_{{\tau }}^{5}{\mathbb{I}}_{\left\{ t<{\tau =\tau }^{\ast }={\tau }
_{1}\leq T\right\} } +B_{{\tau }}^{-1}C_{{\tau }}{\mathbb{I}}_{\left\{ t<{
\tau }\leq T\right\} } \notag \\
& \ \ \ -\kappa \int_{\left] t,T\right] }B_{u}^{-1}{\mathbb{I}}_{\left\{ {\tau }
>u\right\} }du \label{F0}.
\end{align}
Using the definitions of the close-out cash-flows $\widetilde{\delta }_{{\tau }}^{i}, \ i=1,\ldots,5,$ we get from \eqref{F0}
\begin{align}
\int_{\left] {t},T\right] }B_{u}^{-1}dD_{u}^{C} &=B_{{\tau }}^{-1}\left(
\delta _{{\tau }}^{1}-C_{{\tau }}\right) {\mathbb{I}}_{\left\{ t<{\tau =\tau
}_{1}\leq T\right\} } \notag \\
& \ \ \ -\kappa \int_{\left] t,T\right] }B_{u}^{-1}{\mathbb{I}}
_{\left\{ {\tau }>u\right\} }du+B_{{\tau }}^{-1}C_{{\tau }}{\mathbb{I}}
_{\left\{ t<{\tau }\leq T\right\} }  \label{F1} \\
& \ \ \ +B_{{\tau }}^{-1}\left( R_{2}\left( S_{{\tau }}\mathbb{+}{\mathbb{I}}
_{\left\{ {\tau =\tau }_{1}\right\} }\delta _{{\tau }}^{1}-C_{{\tau }
}\right) ^{+} \right. \notag \\
& \ \ \ \left.-\left( S_{{\tau }}+{\mathbb{I}}_{\left\{ {\tau =\tau }
_{1}\right\} }\delta _{{\tau }}^{1}-C_{{\tau }}\right) ^{-}\right) {\mathbb{I
}}_{\left\{ t<{\tau =\tau }_{2}\leq T\right\} }  \notag \\
& \ \ \ +B_{{\tau }}^{-1}\left( \left( S_{{\tau }}+{\mathbb{I}}_{\left\{ {\tau
=\tau }_{1}\right\} }\delta _{{\tau }}^{1}-C_{{\tau }}\right)
^{+} \right. \notag \\
& \ \ \ \left. -R_{3}\left( S_{{\tau }}+{\mathbb{I}}_{\left\{ {\tau =\tau }_{1}\right\}
}\delta _{{\tau }}^{1}-C_{{\tau }}\right) ^{-}\right) {\mathbb{I}}_{\left\{
t<{\tau =\tau }_{3}\leq T\right\} }  \notag \\
& \ \ \ -B_{{\tau }}^{-1}\left( S_{{\tau }}+{\mathbb{I}}_{\left\{ {\tau =\tau }
_{1}\right\} }\delta _{{\tau }}^{1}-C_{{\tau }}\right) {\mathbb{I}}_{\left\{
t<{\tau =\tau }_{2}={\tau }_{3}\leq T\right\} } \notag \\
& \ \ \ -B_{{\tau }}^{-1}\left(
\delta _{{\tau }}^{1}-C_{{\tau }}\right) {\mathbb{I}}_{\left\{ t<{\tau =\tau
}^{\ast }={\tau }_{1}\leq T\right\} }  \notag.
\end{align}
Since
\begin{align*}
\mathbb{I}_{\left\{ t<{\tau }\leq T\right\} } &= {\mathbb{I}}_{\left\{
t<{\tau =\tau }_{1}\leq T\right\} }+{\mathbb{I}}_{\left\{ t<{\tau =\tau }
_{2}\leq T\right\} }+{\mathbb{I}}_{\left\{ t<{\tau =\tau }_{3}\leq T\right\}} \\
& \ \ \ -{\mathbb{I}}_{\left\{ t<{\tau =\tau }_{2}={\tau }_{3}\leq T\right\} }-{
\mathbb{I}}_{\left\{ t<{\tau }^{{\ast }}{{=\tau }_{1}}\leq T\right\} },
\end{align*}
using the equality
\begin{equation*}
R_{i}\left( S_{{\tau }}-C_{{\tau }}\right) ^{+}-\left( S_{{\tau }}-C_{{\tau }
}\right) ^{-}+C_{{\tau }}=S_{{\tau }}-\left( 1-R_{i}\right) \left( S_{{\tau }
}-C_{{\tau }}\right) ^{+}
\end{equation*}
and observing that ${\mathbb{I}}_{\left\{ {\tau =\tau }_{1}\right\} }S_{{ \tau }}=0,$ we can rearrange the terms in $\left( \ref{F1}\right) $ as follows,
\begin{align}
\int_{\left] {t},T\right] }B_{u}^{-1}dD_{u}^{C} &=B_{{\tau }}^{-1}\delta _{{
\tau }}^{1}{\mathbb{I}}_{\left\{ t<{\tau =\tau }_{1}\leq T\right\} }-\kappa
\int_{\left] t,T\right] }B_{u}^{-1}{\mathbb{I}}_{\left\{ {\tau }>u\right\}
}du  \label{F2} \\
& \ \ \ +B_{{\tau }}^{-1}S_{{\tau }}\left( {\mathbb{I}}_{\left\{ t<{\tau =\tau }
_{2}\leq T\right\} }+{\mathbb{I}}_{\left\{ t<{\tau =\tau }_{3}\leq T\right\}
} \right. \notag \\
& \ \ \ \left. -{\mathbb{I}}_{\left\{ t<{\tau =\tau }_{2}={\tau }_{3}\leq T\right\}
}\right) {\mathbb{I}}_{\left\{ {\tau \neq \tau }_{1}\right\} }  \notag \\
& \ \ \ -B_{{\tau }}^{-1}\left( 1-R_{2}\right) \left( S_{{\tau }}+{\mathbb{I}}
_{\left\{ {\tau =\tau }_{1}\right\} }\delta _{{\tau }}^{1}-C_{{\tau }
}\right) ^{+}{\mathbb{I}}_{\left\{ t<{\tau =\tau }_{2}\leq T\right\} }
\notag \\
& \ \ \ +B_{{\tau }}^{-1}\left( 1-R_{3}\right) \left( S_{{\tau }}+{\mathbb{I}}
_{\left\{ {\tau =\tau }_{1}\right\} }\delta _{{\tau }}^{1}-C_{{\tau }
}\right) ^{-}{\mathbb{I}}_{\left\{ t<{\tau =\tau }_{3}\leq T\right\} }.
\notag
\end{align}
Now, combining \eqref{F2} with $\eqref{d1}$ we see that
\begin{align*}
S_{{t}}^{C} &=B_{t}{\mathbb{E}}_{\mathbb{Q}}\left( \left. \left( {\mathbb{I}
}_{\left\{ t<{\tau =\tau }_{1}\leq T\right\} }+{\mathbb{I}}_{\left\{ {\tau >}
T\right\} }\right) \int_{\left] t,T\right] }B_{u}^{-1}dD_{u}\right\vert
\mathcal{G}_{t}\right) \\
& \ \ \ +B_{t}{\mathbb{E}}_{\mathbb{Q}}\left( \left. \left( {\mathbb{I}}
_{\left\{ t<{\tau =\tau }_{2}\leq T\right\} }+{\mathbb{I}}_{\left\{ t<{\tau
=\tau }_{3}\leq T\right\} } \right.\right.\right. \\
& \ \ \ \left.\left.\left.\left.  -{\mathbb{I}}_{\left\{ t<{\tau =\tau }_{2}={\tau }
_{3}\leq T\right\} }\right) {\mathbb{I}}_{\left\{ {\tau \neq \tau }
_{1}\right\} }\right) {\mathbb{E}}_{\mathbb{Q}}\left( \left. \int_{\left] {
\tau },T\right] }B_{u}^{-1}dD_{u}\right\vert \mathcal{G}_{{\tau }}\right)
\right\vert \mathcal{G}_{t}\right) \\
& \ \ \ -B_{t}{\mathbb{E}}_{\mathbb{Q}}\left( \left. {\mathbb{I}}_{\left\{ t<{\tau
=\tau }_{2}\leq T\right\} }B_{{\tau }}^{-1}\left( 1-R_{2}\right) \left( S_{{
\tau }}+{\mathbb{I}}_{\left\{ {\tau =\tau }_{1}\right\} }\delta _{{\tau }
}^{1}-C_{{\tau }}\right) ^{+}\right\vert \mathcal{G}_{t}\right) \\
& \ \ \ +B_{t}{\mathbb{E}}_{\mathbb{Q}}\left( \left. {\mathbb{I}}_{\left\{ t<{\tau
=\tau }_{3}\leq T\right\} }B_{{\tau }}^{-1}\left( 1-R_{3}\right) \left( S_{{
\tau }}+{\mathbb{I}}_{\left\{ {\tau =\tau }_{1}\right\} }\delta _{{\tau }
}^{1}-C_{{\tau }}\right) ^{-}\right\vert \mathcal{G}_{t}\right) .
\end{align*}
From here, observing that
\begin{equation*}
{\mathbb{I}}_{\left\{ {\tau \leq t}\right\} }+{\mathbb{I}}_{\left\{ {\tau >}
T\right\} }+{\mathbb{I}}_{\left\{ t<{\tau =\tau }_{1}\leq T\right\} }+\left(
{\mathbb{I}}_{\left\{ t<{\tau =\tau }_{2}\leq T\right\} }+{\mathbb{I}}
_{\left\{ t<{\tau =\tau }_{3}\leq T\right\} }-{\mathbb{I}}_{\left\{ t<{\tau
=\tau }_{2}={\tau }_{3}\leq T\right\} }\right) {\mathbb{I}}_{\left\{ {\tau
\neq \tau }_{1}\right\} }=1,
\end{equation*}
we get
\begin{align}
S_{{t}}^{C} &=B_{t}{\mathbb{E}}_{\mathbb{Q}}\left( \left. \int_{\left] t,T
\right] }B_{u}^{-1}dD_{u}\right\vert \mathcal{G}_{t}\right)  \notag \\
& \ \ \ -B_{t}{\mathbb{E}}_{\mathbb{Q}}\left( \left. {\mathbb{I}}_{\left\{ t<{\tau
=\tau }_{2}\leq T\right\} }B_{{\tau }}^{-1}\left( 1-R_{2}\right) \left( S_{{
\tau }}-C_{{\tau }}\right) ^{+}\right\vert \mathcal{G}_{t}\right)  \notag \\
& \ \ \ +B_{t}{\mathbb{E}}_{\mathbb{Q}}\left( \left. {\mathbb{I}}_{\left\{ t<{\tau
=\tau }_{3}\leq T\right\} }B_{{\tau }}^{-1}\left( 1-R_{3}\right) \left( S_{{
\tau }}-C_{{\tau }}\right) ^{-}\right\vert \mathcal{G}_{t}\right),  \notag \\
\end{align}
which is
\begin{align}
S_{{t}}^{C} &= S_t-B_{t}{\mathbb{E}}_{\mathbb{Q}}\left( \left. {\mathbb{I}}_{\left\{ t<{\tau
=\tau }_{2}\leq T\right\} }B_{{\tau }}^{-1}\left( 1-R_{2}\right) \left( S_{{
\tau }}-C_{{\tau }}\right) ^{+}\right\vert \mathcal{G}_{t}\right)  \notag \\
& \ \ \ +B_{t}{\mathbb{E}}_{\mathbb{Q}}\left( \left. {\mathbb{I}}_{\left\{ t<{\tau
=\tau }_{3}\leq T\right\} }B_{{\tau }}^{-1}\left( 1-R_{3}\right) \left( S_{{
\tau }}-C_{{\tau }}\right) ^{-}\right\vert \mathcal{G}_{t}\right)  \notag .
\end{align}
This proves the result.
\end{proof}

\begin{remark} \label{CVAoption}
The above results shows that the value of the bilateral CVA is the same as the sum of the value of a long position in a zero-strike call option on the uncollateralized amount and the value of a short position in a zero-strike put option on the uncollateralized amount.
\end{remark}

\subsubsection{Unilateral CVA and Debt Value Adjustment}

The bilateral nature of the counterparty risk is a consequence of possible default of the counterparty and the possible default of the investor. The values of potential losses associated with these two components are called unilateral CVA (UCVA) and debt value adjustment (DVA), respectively, and defined below.

\begin{definition}
The Unilateral Credit Value Adjustment is defined as,
\begin{equation*}
\text{UCVA}_{t}=B_{t}{\mathbb{E}}_{\mathbb{Q}}\left( \left. {\mathbb{I}}
_{\left\{ t<{\tau =\tau }_{2}\leq T\right\} }B_{{\tau }}^{-1}\left(
1-R_{2}\right) \left( S_{{\tau }}+{\mathbb{I}}_{\left\{ {\tau =\tau }
_{1}\right\} }\delta _{{\tau }}^{1}-C_{{\tau }}\right) ^{+}\right\vert
\mathcal{G}_{t}\right) ,\ t\in \left[ 0,T\right],
\end{equation*}
and symmetrically the Debt Value Adjustment is defined as,
\begin{equation*}
\text{DVA}_{t}=B_{t}{\mathbb{E}}_{\mathbb{Q}}\left( \left. {\mathbb{I}}
_{\left\{ t<{\tau =\tau }_{3}\leq T\right\} }B_{{\tau }}^{-1}\left(
1-R_{3}\right) \left( S_{{\tau }}+{\mathbb{I}}_{\left\{ {\tau =\tau }
_{1}\right\} }\delta _{{\tau }}^{1}-C_{{\tau }}\right) ^{-}\right\vert
\mathcal{G}_{t}\right)  ,\ t\in \left[ 0,T\right].
\end{equation*}

\end{definition}

\begin{remark}
 DVA accounts for  the risk of investor's own default, and it represents the value of any potential outstanding liabilities of the investors that will not be honored at the time of the investor's default:

 In fact, at time of his/her default, the investor only pays to the counterparty the recovery amount, that is $R_{3}\left( S_{{\tau }}+{\mathbb{I}}_{\left\{ {\tau =\tau }_{1}\right\} }\delta _{{\tau }}^{1}-C_{{\tau } }\right) ^{-}$. Therefore, the investor gains the remaining amount, which is equal to $\left( 1-R_{3}\right) \left( S_{{\tau }}+{ \mathbb{I}}_{\left\{ {\tau =\tau }_{1}\right\} }\delta _{{\tau }}^{1}-C_{{ \tau }}\right) ^{-},$ on his/her outstanding liabilities by defaulting.  Risk management of this component is of great importance for financial institutions.

 When considering the unilateral counterparty risk DVA is set to zero.

\end{remark}

In view of Proposition \ref{prop1} and of the above definition we have that
\begin{equation*}
\text{CVA}_{t}=\text{UCVA}_{t}-\text{DVA}_{t},\ t\in [ 0,T].
\end{equation*}

Note that the bilateral CVA amount may be negative for the investor due to ``own default risk''. This also indicates that the price $S^{C}$ of counterparty risky CDS contract may be greater than the price $S$ of counterparty risk-free contract.

\begin{remark} \textbf{(Upfront CDS Conversion)}

After the ``CDS Big Bang'' (cf. \cite{Markit2009}) a process has been originated to replace standard CDS contracts with so called {\it upfront CDS contracts}.
An upfront CDS contract is composed of an upfront payment, which is an amount to be exchanged upon the inception of the
contract, and of a fixed spread.
The fixed spread, say $\widehat{\kappa }$, will be 100bps for investment grade CDS contracts, and 500bps for high yield CDS contracts.
The recovery rate is also standardized to two possible values: 20\% or 40\%, depending on the credit worthiness of the reference name.
The corresponding cumulative dividend process of a counterparty-risk-free CDS contract is described in the following definition.

\begin{definition}\label{def:CumDivUpfrontCDS}
The cumulative dividend process $\widehat{D}$ of a counterparty-risk-free upfront CDS contract, maturing at time $T$, is given as
\begin{equation*}
\widehat{D}_{t}=\int_{\left] 0,t\right] }\delta _{u}^{1}dH_{u}^{1}- \textrm{UP} -
\widehat{\kappa }\int_{\left] 0,t\right] }\left( 1-H_{u}^{1}\right) du \, , \quad t\in \left[ 0,T\right],
\end{equation*}
where $\textrm{UP}$ is the upfront payment, and $\widehat{\kappa }$ is the fixed spread.
\end{definition}

Recall that the spread $\kappa_0$ of a standard CDS contract is set such that the protection leg $PL_0$ and fixed leg $\kappa_0 DV01_0$ are equal at initiation (making  the price of the contract to be zero). Similarly, in the case of an upfront CDS contract, with $\widehat{\kappa}$ being fixed, the upfront payment $\textrm{UP}$ is chosen such that the contract has zero value at initiation. It is easy to convert the conventional spread $\kappa_0$ into an upfront payment PU and vise versa. Indeed, directly from the Definition \ref{def:CumDivUpfrontCDS}, and  definitions of $PL_0$ and $DV01_0$, we have
\begin{equation*}
PL_{0}-UP-\widehat{\kappa }DV01_{0}=PL_{0}-\kappa _{0}DV01_{0}=0\, ,
\end{equation*}
which implies the following representations
\begin{equation*}
UP=\left( \kappa _{0}-\widehat{\kappa }\right) DV01_{0}, \qquad
\kappa _{0}=\frac{UP}{DV01_{0}}+\widehat{\kappa }\, .
\end{equation*}

In view of the conversion formulae presented above the discussion of CVA, DVA and UCVA done for standard CDS contracts can be adopted to the case of the upfront CDS contracts in a straightforward manner.

\end{remark}

\subsubsection{CVA via Credit Exposures}

Credit exposure is defined as the potential loss that may be suffered by either one of the counterparties due to the other party's default. Here, we discuss some measures commonly used to quantify credit exposure, such as \textit{Potential Future Exposure} (PFE), \textit{Expected Positive Exposure} (EPE) and \textit{Expected Negative Exposure} (ENE), and their relation to CVA.

\medskip
\noindent
Potential Future Exposure is the basic measure of credit exposure:
\begin{definition}
Potential Future Exposure of a CDS contract with a bilateral margin agreement is defined as follows,
\begin{align}
PFE &= {\mathbb{I}}_{\left\{ {\tau =\tau }_{2}\right\} }\left( 1-R_{2}\right)
\left( S_{{\tau }}+{\mathbb{I}}_{\left\{ {\tau =\tau }_{1}\right\} }\delta _{
{\tau }}^{1}-C_{{\tau }}\right) ^{+} \notag \\
& \ \ \ -{\mathbb{I}}_{\left\{ {\tau =\tau }_{3}\right\} }\left( 1-R_{3}\right) \left( S_{{\tau }}+{\mathbb{I}}_{\left\{
{\tau =\tau }_{1}\right\} }\delta _{{\tau }}^{1}-C_{{\tau }}\right) ^{-}. \notag
\end{align}
\end{definition}
Note that there exists several forms in which the potential future exposure is defined by financial institutions. The PFE definition given above, as a random variable, is in line with the PFE definitions in (cf. \cite{DePriscoRosen2005}), as opposed to the rather classical definition of the PFE as the quantile of the exposure distribution (cf. \cite{Cesari2010}).
\begin{remark}
Observe that the CVA is related to PFE as follows,
\begin{equation*}
CVA_{t}=B_{t}{\mathbb{E}}_{\mathbb{Q}}\left( \left. {\mathbb{I}}_{\left\{ t<{
\tau }\leq T\right\} }B_{{\tau }}^{-1}PFE\right\vert \mathcal{G}_{t}\right),\ t\in \left[ 0,T\right].
\end{equation*}

\end{remark}

Expected Positive Exposure is defined as the expected amount the investor will lose if the counterparty default happens at time $t$, and Expected Negative Exposure is defined as the expected amount the investor will lose if his own default happens at time $t$. Note that the losses are conditional on default at time $t$. EPE and ENE are necessary quantities to price and hedge counterparty risk.

\begin{definition}
The Expected Positive Exposure of a CDS contract with a bilateral margin agreement is defined as,
\begin{equation*}
EPE_{t}={\mathbb{E}}_{\mathbb{Q}}\left( \left. \left( 1-R_{2}\right) \left(
S_{{\tau }}+{\mathbb{I}}_{\left\{ {\tau =\tau }_{1}\right\} }\delta _{{\tau }
}^{1}-C_{{\tau }}\right) ^{+}\right\vert {\tau }={\tau }_{2}=t\right) ,
\end{equation*}
and the Expected Negative Exposure is defined as,
\begin{equation*}
ENE_{t}={\mathbb{E}}_{\mathbb{Q}}\left( \left. \left( 1-R_{3}\right) \left(
S_{{\tau }}+{\mathbb{I}}_{\left\{ {\tau =\tau }_{1}\right\} }\delta _{{\tau }
}^{1}-C_{{\tau }}\right) ^{-}\text{ }\right\vert {\tau }={\tau }_{3}=t\right)
\end{equation*}
for every $t\in[0,T]$.
\end{definition}

\subsection{Dynamics of CVA}\label{sec:CVAdynamics}
In this section we derive the  dynamics for CVA. This is important for deriving formulae for dynamic hedging of counterparty risk, the issue that will be discussed in a different paper.

We begin with defining some auxiliary stopping times, that will come handy later on:

\begin{eqnarray*}
&&\tau ^{\{1\}} :=
\begin{cases}
\tau _{1} & \text{if $\tau _{1}\neq \tau _{2},\,\tau _{1}\neq \tau _{3}$} \\
\infty & \text{otherwise}
\end{cases}
,\text{ }\tau ^{\{2\}} :=
\begin{cases}
\tau _{2} & \text{if }{\tau }_{{2}}\neq {\tau }_{1},{\tau }_{{2}}\neq {\tau }
_{{3}} \\
\infty & \text{otherwise}
\end{cases}
, \\
&&\tau ^{\{3\}} :=
\begin{cases}
\tau _{3} & \text{if }{\tau }_{{3}}\neq {\tau }_{{1}},{\tau }_{{3}}\neq {
\tau }_{{2}} \\
\infty & \text{otherwise}
\end{cases}
,\text{ }\tau ^{\{4\}} :=
\begin{cases}
\tau _{2} & \text{if $\tau _{2} =\tau _{3},\,\tau _{2}\neq \tau _{1}$} \\
\infty & \text{otherwise}
\end{cases}
, \\
&&\tau ^{\{5\}} :=
\begin{cases}
\tau _{1} & \text{if $\tau _{1} =\tau _{2},\,\tau _{1}\neq \tau _{3}$} \\
\infty & \text{otherwise}
\end{cases}
,\text{ }\tau ^{\{6\}} :=
\begin{cases}
\tau _{1} & \text{if $\tau _{1} =\tau _{3},\,\tau _{1}\neq \tau $}_{2} \\
\infty & \text{otherwise}
\end{cases}
,
\end{eqnarray*}
$$\tau ^{\{7\}} :=
\begin{cases}
\tau _{1} & \text{if $\tau _{1} =\tau _{2} =\tau _{3}$} \\
\infty & \text{otherwise}
\end{cases}
.$$

Accordingly, we define the default indicator processes:
\begin{align*}
H_{t}^{\left\{ 1\right\} }:&={\mathbb{I}}_{\left\{ {\tau }_{{1}}\leq t,{
\tau }_{{1}}\neq {\tau }_{{2}},{\tau }_{{1}}\neq {\tau }_{{3}}\right\} }={
\mathbb{I}}_{\left\{ {\tau }^{\left\{ 1\right\} }\leq t,\right\} },&
H_{t}^{\left\{ 2\right\} }:&={\mathbb{I}}_{\left\{ {\tau }_{{2}}\leq t,{\tau }
_{{2}}\neq {\tau }_{1},{\tau }_{{2}}\neq {\tau }_{{3}}\right\} }={\mathbb{I}}
_{\left\{ {\tau }^{\left\{ 2\right\} }\leq t,\right\} }, \\
H_{t}^{\left\{ 3\right\} }:&={\mathbb{I}}_{\left\{ {\tau }_{{3}}\leq t,{
\tau }_{{3}}\neq {\tau }_{{1}},{\tau }_{{3}}\neq {\tau }_{{2}}\right\} }={
\mathbb{I}}_{\left\{ {\tau }^{\left\{ 3\right\} }\leq t,\right\} },&
H_{t}^{\left\{ 4\right\} }:&={\mathbb{I}}_{\left\{ {\tau }_{{2}}={\tau }_{{3}
}\leq t,{\tau }_{{1}}\neq {\tau }_{{2}}\right\} }={\mathbb{I}}_{\left\{ {
\tau }^{\left\{ 4\right\} }\leq t,\right\} }, \\
H_{t}^{\left\{ 5\right\} }:&={\mathbb{I}}_{\left\{ {\tau }_{{1}}={\tau }_{{2
}}\leq t,{\tau }_{{1}}\neq {\tau }_{{3}}\right\} }={\mathbb{I}}_{\left\{ {
\tau }^{\left\{ 5\right\} }\leq t,\right\} },&
H_{t}^{\left\{ 6\right\} }:&={\mathbb{I}}_{\left\{ {\tau }_{{1}}={\tau }_{{3}}\leq t,{\tau }_{{1}}\neq {
\tau }_{{2}}\right\} }={\mathbb{I}}_{\left\{ {\tau }^{\left\{ 6\right\}
}\leq t,\right\} },
\end{align*}
$$H_{t}^{\left\{ 7\right\} }:={\mathbb{I}}_{\left\{ {\tau }_{{1}}={\tau }
_{{2}}={\tau }_{{3}}\leq t\right\} }={\mathbb{I}}_{\left\{ {\tau }^{\left\{
7\right\} }\leq t\right\} }.$$

\begin{remark}
Note that one can represent  processes $H_{t}^{\left\{ i\right\}},i=1,\ldots, 7$, as follows
\begin{align*}
H_{t}^{\left\{ 7\right\} }&=\left[ \left[ H^{1},H^{2}\right] ,H^{3}\right]_{t}, \; \, \, \: \,
H_{t}^{\left\{ 6\right\} }=\left[ H^{1},H^{3}\right] _{t}-H_{t}^{\left\{ 7\right\} }, \\
H_{t}^{\left\{ 5\right\} }&=\left[ H^{1},H^{2}\right] _{t}-H_{t}^{\left\{ 7\right\} }, \, \; \,
H_{t}^{\left\{ 4\right\} }=\left[ H^{2},H^{3}\right] _{t}-H_{t}^{\left\{ 7\right\} }, \\
H_{t}^{\left\{ 3\right\} }&=H_{t}^{3}-H_{t}^{\left\{ 4\right\} }-H_{t}^{\left\{ 6\right\}}-H_{t}^{\left\{ 7\right\} }, \\
H_{t}^{\left\{ 2\right\} }&=H_{t}^{2}-H_{t}^{\left\{ 4\right\} }-H_{t}^{\left\{ 5\right\}}-H_{t}^{\left\{ 7\right\} }, \\
H_{t}^{\left\{ 1\right\} }&=H_{t}^{1}-H_{t}^{\left\{ 5\right\} }-H_{t}^{\left\{ 6\right\}}-H_{t}^{\left\{ 7\right\} }.
\end{align*}
In particular, these processes are $\mathbb{G}$-adapted processes.
\end{remark}

Let $G(t) =\mathbb{Q}\left( \left. {\tau }>t\right\vert \mathcal{F}_{t}\right) $ be the survival probability process of ${\tau }$ with respect to filtration $\mathbb{F}$. It is a $\bF$ supermartingale and it admits unique Doob-Meyer decomposition $G=\mu - \nu$ where $\mu$ is the martingale part and $\nu$ is a predictable increasing process.  We assume that $G$ is a continuous process and $v$ is absolutely continuous with respect to the Lebesgue measure, so that $d\nu_{t}=v _{t}dt$ for some $\mathbb{F}$-progressively measurable, non-negative process $v$. We denote by $l$ the $\bF$-progressively measurable process defined as $l_{t}=G(t)^{-1}v _{t}$. Finally, we assume that all $\bF$ martingales are continuous.

We assume that hazard process of each stopping time ${\tau }^{\left\{i\right\} }$ admits an $( \mathbb{F},\mathbb{G})$-intensity process $q^{i}$ for every $i=1,\ldots ,7$, so that the process $M^{\left\{
i\right\} },$ given by the formula,
\begin{equation*}
M_{t}^{\left\{ i\right\} }=H_{t}^{\left\{ i\right\} }-\int_{0}^{t}\left(
1-H_{u}^{\left\{ i\right\} }\right) q_{u}^{i}du
\end{equation*}
is a $\mathbb{G}$-martingale for every $t\in \left[ 0,T\right] $ and $i=1,\ldots ,7$.

We now have the following technical result,
\begin{lemma}
The processes
\begin{equation*}
M_{t}^{i}:=M_{t\wedge \tau }^{\left\{ i\right\} }=H_{t\wedge \tau }^{\left\{ i\right\} }-\int_{0}^{t\wedge \tau
}l_{u}^{i}du,\ \ t\geq 0, i=1,2,\ldots,7,
\end{equation*}
and
\begin{equation*}
M_{t}:=H_{t\wedge \tau }-\int_{0}^{t\wedge \tau }l_{u}du,\ \ t\geq 0,
\end{equation*}
 where
\begin{equation*}
l_{t}^{i}={\mathbb{I}}_{\left\{ {\tau }\geq t\right\} }q_{t}^{i}\text{ and }
l_{t}=\sum_{i=1}^{7}l_{t}^{i}\ \ t\geq 0, i=1,2,\ldots,7,
\end{equation*}
are $\mathbb{G}$-martingales
\end{lemma}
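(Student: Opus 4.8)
The plan is to obtain all the processes in the statement from the given $\mathbb{G}$-martingales $M^{\{i\}}$, $i=1,\dots,7$: each $M^i$ is the stopping of $M^{\{i\}}$ at $\tau$, and $M$ is then recovered as a finite sum of the $M^i$. For the first part, recall that $\tau=\tau_1\wedge\tau_2\wedge\tau_3$ is a $\mathbb{G}$-stopping time and that, by assumption, each $M^{\{i\}}$ is a $\mathbb{G}$-martingale on $[0,T]$, hence uniformly integrable there; by the optional stopping theorem (a martingale stopped at a stopping time is a martingale) the stopped process $t\mapsto M^{\{i\}}_{t\wedge\tau}$ is again a $\mathbb{G}$-martingale. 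It then remains only to identify this stopped process with the formula claimed for $M^i$.

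That identification amounts to checking
\[
\int_{0}^{t\wedge\tau}\bigl(1-H^{\{i\}}_u\bigr)q^i_u\,du=\int_{0}^{t\wedge\tau}l^i_u\,du,\qquad l^i_u=\mathbb{I}_{\{\tau\ge u\}}q^i_u .
\]
This is immediate: on the interval of integration $u\le\tau$, so $\mathbb{I}_{\{\tau\ge u\}}=1$; and for $u<\tau$ none of the times $\tau_1,\tau_2,\tau_3$ has occurred, so $H^{\{i\}}_u=0$ and $1-H^{\{i\}}_u=1$. Since the single remaining point $u=\tau$ is Lebesgue-null, the two integrands agree $du$-almost everywhere on $[0,t\wedge\tau]$, giving $M^i_t=M^{\{i\}}_{t\wedge\tau}$; hence $M^i$ is a $\mathbb{G}$-martingale for $i=1,\dots,7$.

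For the process $M$, since $l=\sum_{i=1}^7 l^i$ by definition, it suffices to prove the pathwise identity $H_{t\wedge\tau}=\sum_{i=1}^{7}H^{\{i\}}_{t\wedge\tau}$; then $M=\sum_{i=1}^{7}M^i$, and $M$ is a $\mathbb{G}$-martingale as a finite sum of $\mathbb{G}$-martingales. To get the identity, observe that $H^{\{i\}}_s=\mathbb{I}_{\{\tau^{\{i\}}\le s\}}$ and that $\tau^{\{i\}}\ge\tau$ for every $i$ (each $\tau^{\{i\}}$ equals either $\infty$ or one of the $\tau_j$), so $H^{\{i\}}_{t\wedge\tau}=\mathbb{I}_{\{\tau^{\{i\}}\le t\wedge\tau\}}=\mathbb{I}_{\{\tau^{\{i\}}=\tau\le t\}}$. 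The seven events $\{\tau^{\{i\}}=\tau\}$ partition $\{\tau<\infty\}$ according to which nonempty subset of $\{1,2,3\}$ attains the first default, so summation over $i$ yields $\mathbb{I}_{\{\tau\le t\}}=H_{t\wedge\tau}$, as required.

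I expect the only genuinely delicate point to be this last identity: it is \emph{false} before stopping — if $\tau_1<\tau_2<\tau_3$ are all distinct then for large $t$ both $H^{\{1\}}_t$ and $H^{\{2\}}_t$ equal $1$ while $H_t=1$ — so it is essential that everything is evaluated at $t\wedge\tau$, which collapses each $H^{\{i\}}$ to its value at the unique jump time $\tau$, where exactly one index configuration is realized. The remaining ingredients (justifying optional stopping in continuous time via the finite horizon $[0,T]$, and the almost-everywhere equality of integrands) are routine. One may also note that $\sum_i l^i$ agrees with the earlier $l=G^{-1}v$ by the classical formula for the $\mathbb{G}$-intensity of $\tau$ together with the fact that the $\tau^{\{i\}}$ never jump simultaneously, though this is not needed for the lemma as stated.
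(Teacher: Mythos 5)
Your proof is correct and follows essentially the same route as the paper's: $M^i$ is the $\mathbb{G}$-martingale $M^{\{i\}}$ stopped at the $\mathbb{G}$-stopping time $\tau$, and $M=\sum_{i=1}^{7}M^i$ is a martingale as a finite sum of martingales. You merely spell out the two identifications that the paper leaves implicit (the a.e.\ equality of the integrands on $[0,t\wedge\tau]$ and the partition identity $H_{t\wedge\tau}=\sum_i H^{\{i\}}_{t\wedge\tau}$, including the correct observation that the latter fails without stopping), which is a welcome but not substantively different elaboration.
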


\begin{proof}
Fix  $i=1,\ldots ,7$. Process $M^{i}$ follows a $\mathbb{G}$-martingale, since it is  $\mathbb{G}$-martingale $M^{\{i\}}$ stopped at the $\bG$ stopping time $\tau.$  Moreover, we have that
$
M_{t}=\sum_{i=1}^{7}M_{t}^{i},
$
so that process $M$ is also a $\mathbb{G}$-martingale.\end{proof}

We shall now proceed with deriving some useful representations for the processes $S^C$ and $S$.

\begin{lemma}\label{ScRep}
The ex-dividend price process, $S^{C}$, of a counterparty risky CDS contract, given in $\left( \ref{pr2}\right) $, can be represented as follows,
\begin{equation}
S_{t}^{C}=B_{t}{\mathbb{E}}_{\mathbb{Q}}\left( \left. B_{{\tau }
}^{-1}\sum_{i=1}^{7}{\mathbb{I}}_{\left\{ t<{\tau =\tau }^{\left\{ i\right\}
}\leq T\right\} }\overline{\delta }_{{\tau }}^{i}-\kappa \int_{\left] t,T
\right] \ \ }B_{u}^{-1}{\mathbb{I}}_{\left\{ {\tau }>u\right\}
}du\right\vert \mathcal{G}_{t}\right)  \label{pr2n}
\end{equation}
where
\begin{eqnarray*}
&&\overline{\delta }_{{t}}^{1}=\delta _{{t}}^{1},\text{ }\overline{\delta }_{
{t}}^{2}=S_{{t}}-\left( 1-R_{2}\right) \left( S_{{t}}-C_{t}\right) ^{+} \\
&&\overline{\delta }_{{t}}^{3}=S_{{t}}+\left( 1-R_{3}\right) \left(
S_{t}-C_{{t}}\right) ^{-}, \\
&&\overline{\delta }_{{t}}^{4}=S_{{t}}-\left( 1-R_{2}\right) \left( S_{{t}
}-C_{t}\right) ^{+}+\left( 1-R_{3}\right) \left( S_{t}-C_{{t}}\right) ^{-} \\
&&\overline{\delta }_{{t}}^{5}=\delta _{{t}}^{1}-\left( 1-R_{2}\right)
\left( \delta _{{t}}^{1}-C_{t}\right) ^{+},\text{ }\overline{\delta }_{{t}
}^{6}=\delta _{{t}}^{1}+\left( 1-R_{3}\right) \left( \delta _{{t}}^{1}-C_{{t}
}\right) ^{-} \\
&&\overline{\delta }_{{t}}^{7}=\delta _{{t}}^{1}-\left( 1-R_{2}\right)
\left( \delta _{{t}}^{1}-C_{t}\right) ^{+}+\left( 1-R_{3}\right) \left(
\delta _{{t}}^{1}-C_{{t}}\right) ^{-}.
\end{eqnarray*}
\end{lemma}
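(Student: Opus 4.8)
The plan is to start from the dividend process $D^C$ in \eqref{d2} and rewrite the stochastic integrals driving it in terms of the refined default indicators $H^{\{i\}}$, $i=1,\dots,7$, introduced just above. The key observation is that on the event $\{t<\tau\le T\}$ exactly one of the seven mutually exclusive scenarios $\{\tau=\tau^{\{i\}}\}$ occurs, and that the driving terms of $D^C$ resolve as follows: $dH^1_u(1-H_{u-})$ contributes on scenarios $1,5,6,7$ (reference name defaults, possibly jointly), $dH^2_u(1-H_{u-})$ on $2,4,5,7$, $dH^3_u(1-H_{u-})$ on $3,4,6,7$, the bracket $d[H^2,H^3]_u$ on $4,7$, the bracket $d[\hat H,H^1]_u$ on $5,6,7$, and $dH_u$ on all of $1,\dots,7$. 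Using $\int_{]t,T]}B_u^{-1}\widetilde\delta^i_u(1-H_{u-})\,dH^i_u=B_\tau^{-1}\widetilde\delta^i_\tau\,\bI_{\{t<\tau=\tau_i\le T\}}$ and the analogous identities for the brackets (already exploited in the proof of Proposition \ref{prop1}), the integral $\int_{]t,T]}B_u^{-1}dD^C_u$ collapses, on each scenario $\{t<\tau=\tau^{\{i\}}\le T\}$, to $B_\tau^{-1}$ times a single coefficient built from the relevant $\widetilde\delta^j_\tau$ and $C_\tau$, plus the common running-premium term $-\kappa\int_{]t,T]}B_u^{-1}\bI_{\{\tau>u\}}\,du$.

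Next I would identify each such coefficient with the claimed $\overline\delta^i_\tau$. On scenarios $1$--$3$ this is essentially the same algebra as in Proposition \ref{prop1}: plug in $\widetilde\delta^1_t=\delta^1_t-C_t$, the $R_2$- and $R_3$-close-out expressions for $\widetilde\delta^2,\widetilde\delta^3$, add back the collateral transfer $C_\tau$ coming from $\int C_u\,dH_u$, and use the identity $R_i(x-C)^+-(x-C)^-+C=x-(1-R_i)(x-C)^+$ together with $\bI_{\{\tau=\tau_1\}}S_\tau=0$; for scenario $1$ one gets $\overline\delta^1_\tau=\delta^1_\tau$, for scenario $2$ one gets $\overline\delta^2_\tau=S_\tau-(1-R_2)(S_\tau-C_\tau)^+$, and symmetrically $\overline\delta^3_\tau=S_\tau+(1-R_3)(S_\tau-C_\tau)^-$. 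For scenario $4$ (joint default of investor and counterparty, reference alive) both $\widetilde\delta^2$ and $\widetilde\delta^3$ close-out terms are triggered together with $\widetilde\delta^4_t=-(S_t-C_t)$ and the collateral $C_\tau$; combining gives $\overline\delta^4_\tau=S_\tau-(1-R_2)(S_\tau-C_\tau)^++(1-R_3)(S_\tau-C_\tau)^-$. Scenarios $5,6,7$ are the ``reference name defaults jointly with a counterparty'' cases: here $S_\tau$ is replaced by the pending recovery $\delta^1_\tau$ (since $\bI_{\{\tau=\tau_1\}}S_\tau=0$, the uncollateralized MtM entering the close-out is $\delta^1_\tau-C_\tau$), the term $\widetilde\delta^5_t=-(\delta^1_t-C_t)$ and the collateral $C_\tau$ enter, and the same $R_i$-identity yields $\overline\delta^5,\overline\delta^6,\overline\delta^7$ with $\delta^1_\tau$ in place of $S_\tau$. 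Finally, assembling $\sum_{i=1}^7\bI_{\{t<\tau=\tau^{\{i\}}\le T\}}B_\tau^{-1}\overline\delta^i_\tau$ and reinstating the premium term, then taking $B_t\,\mathbb{E}_{\mathbb{Q}}(\cdot\,|\,\mathcal{G}_t)$, gives \eqref{pr2n}.

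The main obstacle is bookkeeping rather than anything deep: one must be careful that the various indicator events $\{\tau=\tau_i\}$ versus the \emph{disjoint} events $\{\tau=\tau^{\{i\}}\}$ are correctly matched — i.e.\ that $\bI_{\{t<\tau=\tau_1\le T\}}=\sum_{i\in\{1,5,6,7\}}\bI_{\{t<\tau=\tau^{\{i\}}\le T\}}$, $\bI_{\{t<\tau=\tau_2\le T\}}=\sum_{i\in\{2,4,5,7\}}\bI_{\{t<\tau=\tau^{\{i\}}\le T\}}$, $\bI_{\{t<\tau=\tau_3\le T\}}=\sum_{i\in\{3,4,6,7\}}\bI_{\{t<\tau=\tau^{\{i\}}\le T\}}$, $\bI_{\{t<\tau=\tau_2=\tau_3\le T\}}=\bI_{\{t<\tau=\tau^{\{4\}}\le T\}}+\bI_{\{t<\tau=\tau^{\{7\}}\le T\}}$, and $\bI_{\{t<\tau=\hat\tau=\tau_1\le T\}}=\sum_{i\in\{5,6,7\}}\bI_{\{t<\tau=\tau^{\{i\}}\le T\}}$ — and then collect, for each fixed $i$, all contributions landing on scenario $i$. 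Substituting the close-out definitions and applying the $R_i$-identity scenario by scenario is then purely mechanical, and the result follows.
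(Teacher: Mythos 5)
Your proposal is correct and follows essentially the same route as the paper: the paper's proof also starts from the representation of $\int_{]t,T]}B_u^{-1}dD_u^C$ obtained in the proof of Proposition \ref{prop1} (its equation (\ref{F2})), splits the indicators $\bI_{\{t<\tau=\tau_j\le T\}}$ into the disjoint events $\{\tau=\tau^{\{i\}}\}$ exactly as in your bookkeeping identities, and collects the coefficients scenario by scenario using $\bI_{\{\tau=\tau_1\}}S_\tau=0$ and the $R_i$-identity. Your scenario-by-scenario verification of the $\overline{\delta}^i$ (including cases $4$ and $7$) matches the paper's computation.
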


\begin{proof}
Let us rewrite $(\ref{F2})$ in the following form,
\begin{align*}
S_{t}^{C}&=B_{t}{\mathbb{E}}_{\mathbb{Q}}\left( B_{{\tau }}^{-1}\delta _{{\tau }}^{1}\sum_{i=1,5,6,7}{\mathbb{I}}_{\left\{ t<{\tau =\tau }^{\left\{i\right\} }\leq T\right\} }+B_{{\tau }}^{-1}S_{{\tau }}\sum_{i=2,3,4}{\mathbb{I}}_{\left\{ t<{\tau =\tau }^{\left\{ i\right\} }\leq T\right\}}\right. \\
& \ \ \ -B_{{\tau }}^{-1}\left( 1-R_{2}\right) \left( S_{{\tau }}+{\mathbb{I}}_{\left\{ {\tau =\tau }_{1}\right\} }\delta _{{\tau }}^{1}-C_{{\tau }}\right) ^{+}\sum_{i=2,4,5,7}{\mathbb{I}}_{\left\{ t<{\tau =\tau }^{\left\{i\right\} }\leq T\right\} } \\
& \ \ \ +B_{{\tau }}^{-1}\left( 1-R_{3}\right) \left( S_{{\tau }}+{\mathbb{I}}_{\left\{ {\tau =\tau }_{1}\right\} }\delta _{{\tau }}^{1}-C_{{\tau }}\right)^{-} \\
& \ \ \ \left.\left. \sum_{i=3,4,6,7}{\mathbb{I}}_{\left\{ t<{\tau =\tau }^{\left\{ i\right\} }\leq T\right\} }-\kappa \int_{\left]t,T\right]}B_{u}^{-1}{\mathbb{I}}_{\left\{ {\tau }>u\right\} }du\right\vert \mathcal{G}_{t}\right) ,
\end{align*}
which, after rearranging terms, leads to
\begin{align*}
S_{t}^{C}&=B_{t}{\mathbb{E}}_{\mathbb{Q}}\left( B_{{\tau }}^{-1}\delta _{{
\tau }}^{1}{\mathbb{I}}_{\left\{ t<{\tau =\tau }^{\left\{ 1\right\} }\leq
T\right\} }+B_{{\tau }}^{-1}\left( S_{{t}}-\left( 1-R_{2}\right) \left( S_{{
\tau }}-C_{\tau }\right) ^{+}\right) {\mathbb{I}}_{\left\{ t<{\tau =\tau }
^{\left\{ 2\right\} }\leq T\right\} }\right. \\
& \ \ \ +B_{{\tau }}^{-1}\left( S_{{\tau }}+\left( 1-R_{3}\right) \left( S_{\tau
}-C_{{\tau }}\right) ^{-}\right) {\mathbb{I}}_{\left\{ t<{\tau =\tau }
^{\left\{ 3\right\} }\leq T\right\} } \\
& \ \ \ +B_{{\tau }}^{-1}\left( S_{{\tau }}-\left( 1-R_{2}\right) \left( S_{{\tau }
}-C_{\tau }\right) ^{+}+\left( 1-R_{3}\right) \left( S_{\tau }-C_{{\tau }
}\right) ^{-}\right) {\mathbb{I}}_{\left\{ t<{\tau =\tau }^{\left\{
4\right\} }\leq T\right\} } \\
& \ \ \ +B_{{\tau }}^{-1}\left( \delta _{{\tau }}^{1}-\left( 1-R_{2}\right) \left(
\delta _{{\tau }}^{1}-C_{\tau }\right) ^{+}\right) {\mathbb{I}}_{\left\{ t<{
\tau =\tau }^{\left\{ 5\right\} }\leq T\right\} } \\
& \ \ \ +B_{{\tau }}^{-1}\left(\delta _{{\tau }}^{1}+\left( 1-R_{3}\right) \left( \delta _{{\tau }}^{1}-C_{{\tau }}\right) ^{-}\right) {\mathbb{I}}_{\left\{ t<{\tau =\tau }^{\left\{6\right\} }\leq T\right\} } \\
& \ \ \ +B_{{\tau }}^{-1}\left( \delta _{{\tau }}^{1}-\left( 1-R_{2}\right) \left(
\delta _{{\tau }}^{1}-C_{\tau }\right) ^{+}+\left( 1-R_{3}\right) \left(
\delta _{{\tau }}^{1}-C_{{\tau }}\right) ^{-}\right) {\mathbb{I}}_{\left\{ t<
{\tau =\tau }^{\left\{ 7\right\} }\leq T\right\} } \\
& \ \ \ \left. \left. -\kappa \int_{\left] t,T\right] }B_{u}^{-1}{\mathbb{I}}
_{\left\{ {\tau }>u\right\} }du\right\vert \mathcal{G}_{t}\right).
\end{align*}
This proves the result.
\end{proof}

\noindent In case when $R_{2}=R_{3}=1$ process $S$ is the same as process $S^C$. Thus, we obtain from the above

\begin{corollary}
The ex-dividend price process $S$ of a counterparty risk-free CDS contract, can be represented as follows \footnote{We note that formula (\ref{StRep}) provides a representation of $S_t$, which is convenient for our purposes. The traditional representation of $S_t$, typically used in the context of counterparty risk free CDS contracts is $$S_{t}=B_{t}{\mathbb{E}}_{\mathbb{Q}}\left( \left. B_{{\tau_1 }}^{-1}{\mathbb{I}
}_{\left\{ t<{\tau_1 }\leq T\right\} }\delta _{{\tau_1 }}^{1}
-\kappa \int_{\left] t,T\right] \ \
}B_{u}^{-1}{\mathbb{I}}_{\left\{ {\tau_1 }>u\right\} }du\right\vert \mathcal{G}
_{t}\right)$$. },
\begin{equation}
S_{t}=B_{t}{\mathbb{E}}_{\mathbb{Q}}\left( \left. B_{{\tau }
}^{-1}\sum_{i=1}^{7}{\mathbb{I}}_{\left\{ t<{\tau =\tau }^{\left\{ i\right\}
}\leq T\right\} }\hat{\delta}_{{\tau }}^{i}-\kappa \int_{\left] t,T\right] \
\ }B_{u}^{-1}{\mathbb{I}}_{\left\{ {\tau }>u\right\} }du\right\vert \mathcal{
G}_{t}\right),  \label{pr1n}
\end{equation}
where $\hat{\delta}_{{t}}^{1} =\hat{\delta}_{{t}}^{5}=\hat{\delta}_{{t}}^{6}=\hat{
\delta}_{{t}}^{7}=\delta _{{t}}^{1}, \ \ \textrm{and} \ \ \hat{\delta}_{{t}}^{2} =\hat{\delta}_{{t}}^{3}=\hat{\delta}_{{t}}^{4}=S_{{t
}}.$
Thus,
\begin{align}
S_{t}&=B_{t}{\mathbb{E}}_{\mathbb{Q}}\left( \left. B_{{\tau }}^{-1}{\mathbb{I}
}_{\left\{ t<{\tau =\tau }_{1}\leq T\right\} }\delta _{{\tau }}^{1}+B_{{\tau
}}^{-1}\sum_{i=2}^{4}{\mathbb{I}}_{\left\{ t<{\tau =\tau }^{\left\{
i\right\} }\leq T\right\} }S_{{\tau }} \right. \right. \label{StRep} \\
& \ \ \ \left. \left. -\kappa \int_{\left] t,T\right]}B_{u}^{-1}{\mathbb{I}}_{\left\{ {\tau }>u\right\} }du\right\vert \mathcal{G}_{t}\right) . \notag
\end{align}
\end{corollary}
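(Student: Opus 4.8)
The plan is to read off the corollary as the $R_2=R_3=1$ specialization of Lemma \ref{ScRep}. The one conceptual point to settle first is that when $R_2=R_3=1$ the counterparty risk disappears, i.e. $S^{C}=S$. This can be seen directly: with full recovery the close-out payments $\widetilde\delta^{2},\ldots,\widetilde\delta^{5}$ collapse to the clean cash flows (the $(\cdot)^{+}$ and $(\cdot)^{-}$ pieces are recombined with coefficient one), so that after time $t$ the discounted increments of $D^{C}$ agree with those of $D$; equivalently, Proposition \ref{prop1} shows that every term of $\text{CVA}_t$ carries a factor $(1-R_2)$ or $(1-R_3)$ and hence $\text{CVA}_t=0$, so $S_t^{C}=S_t$ for all $t\in[0,T]$. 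This is the only place where one invokes the structure of the $\widetilde\delta^i$; the rest is bookkeeping.

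Next I would substitute $R_2=R_3=1$ into the representation \eqref{pr2n} of Lemma \ref{ScRep}. Every correction term in $\overline\delta^{2},\ldots,\overline\delta^{7}$ is multiplied by $(1-R_2)$ or $(1-R_3)$ and therefore vanishes, leaving $\overline\delta^{1}=\overline\delta^{5}=\overline\delta^{6}=\overline\delta^{7}=\delta^{1}$ and $\overline\delta^{2}=\overline\delta^{3}=\overline\delta^{4}=S$. Writing $\hat\delta^{i}$ for these limiting values and using $S^{C}=S$ established above, the representation \eqref{pr1n} follows immediately from \eqref{pr2n}.

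Finally, to pass from \eqref{pr1n} to \eqref{StRep} I would collapse the sum over $i=1,\ldots,7$ into two terms by grouping summands according to the common value of $\hat\delta^{i}$. For the group $i\in\{1,5,6,7\}$ the events $\{\tau=\tau^{\{i\}}\}$ are pairwise disjoint and exhaust $\{\tau=\tau_1\}$ — the cases $\tau_1<\tau_2\wedge\tau_3$, $\tau_1=\tau_2<\tau_3$, $\tau_1=\tau_3<\tau_2$, $\tau_1=\tau_2=\tau_3$ partition $\{\tau=\tau_1\}$ — so $\sum_{i\in\{1,5,6,7\}}\mathbb{I}_{\{t<\tau=\tau^{\{i\}}\le T\}}=\mathbb{I}_{\{t<\tau=\tau_1\le T\}}$ and that block contributes $B_\tau^{-1}\mathbb{I}_{\{t<\tau=\tau_1\le T\}}\delta^{1}_\tau$. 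For the group $i\in\{2,3,4\}$ one simply keeps $\sum_{i=2}^{4}\mathbb{I}_{\{t<\tau=\tau^{\{i\}}\le T\}}S_\tau$ (which, if one wishes, equals $\mathbb{I}_{\{t<\tau\le T,\ \tau\ne\tau_1\}}S_\tau$, since the three events partition $\{\tau\le T,\ \tau\ne\tau_1\}$). Collecting these with the unchanged coupon integral $-\kappa\int_{]t,T]}B_u^{-1}\mathbb{I}_{\{\tau>u\}}du$ inside the conditional expectation $\mathbb{E}_{\mathbb{Q}}(\cdot\,|\,\mathcal G_t)$ yields \eqref{StRep}. The only step requiring a little care is this combinatorial check that the seven auxiliary stopping times partition the first-default events as claimed, but it is routine given their definitions, so I do not expect a genuine obstacle here.
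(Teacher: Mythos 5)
Your proposal is correct and follows the same route as the paper: the paper obtains the corollary precisely by setting $R_{2}=R_{3}=1$ in Lemma \ref{ScRep} (noting that $S^{C}=S$ in that case), and the passage to \eqref{StRep} is exactly your grouping of the indicators over $i\in\{1,5,6,7\}$ and $i\in\{2,3,4\}$. You merely spell out the justification that $S^{C}=S$ and the partition check, which the paper leaves implicit.
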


The following result is borrowed from \cite{Bielecki2008} (see Lemma 3.1 therein)

\begin{lemma}\label{kLemma}
The following equality holds ($\mathbb{Q}$ a.s.)
\begin{equation}\label{eq:kLeq}
B_{t}{\mathbb{E}}_{\mathbb{Q}}\left( \left. {\mathbb{I}}_{\left\{ t<{\tau
=\tau }^{\left\{ i\right\} }\leq T\right\} }B_{\tau }^{-1}\overline{\delta }
_{\tau }^{i}\right\vert \mathcal{G}_{t}\right) ={\mathbb{I}}_{\left\{ t<{
\tau }\right\} }\frac{B_{t}}{G(t) }{\mathbb{E}}_{\mathbb{Q}
}\left( \left. \int_{t}^{T}B_{u}^{-1}l_{u}^{i}\overline{\delta }
_{u}^{i}G\left( u\right) du\right\vert \mathcal{F}_{t}\right),
\end{equation}
for every $t\in \left[ 0,T\right]$.
\end{lemma}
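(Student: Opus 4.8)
The plan is to reduce the $\mathbb{G}$-conditional expectation on the left-hand side to an $\mathbb{F}$-conditional expectation via the standard filtration-switching formula for the enlarged filtration $\mathbb{G} = \mathbb{F} \vee \mathbb{H}^1 \vee \mathbb{H}^2 \vee \mathbb{H}^3$. On the event $\{t < \tau\}$, for any $\mathcal{G}_T$-measurable integrable random variable of the form considered here, the key identity is
\begin{equation*}
\mathbb{I}_{\{t < \tau\}}\,\mathbb{E}_{\mathbb{Q}}\!\left(\left.\mathbb{I}_{\{t<\tau=\tau^{\{i\}}\leq T\}}\,Y\right|\mathcal{G}_t\right) = \frac{\mathbb{I}_{\{t<\tau\}}}{G(t)}\,\mathbb{E}_{\mathbb{Q}}\!\left(\left.\mathbb{I}_{\{t<\tau=\tau^{\{i\}}\leq T\}}\,Y\right|\mathcal{F}_t\right),
\end{equation*}
which holds because $\mathcal{G}_t$ conditioning collapses, on $\{t<\tau\}$, to $\mathcal{F}_t$ conditioning weighted by $G(t)^{-1}$. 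So first I would invoke this (citing Lemma 3.1 of \cite{Bielecki2008} as the paper already does, or reproving it in two lines from the definition of $G$ and the $\mathbb{F}$-conditional independence built into the reduced-form setup).

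Second, I would handle the remaining $\mathbb{F}$-conditional expectation $\mathbb{E}_{\mathbb{Q}}(\mathbb{I}_{\{t<\tau=\tau^{\{i\}}\leq T\}} B_\tau^{-1}\overline\delta_\tau^i \mid \mathcal{F}_t)$. Here I would use that $\tau^{\{i\}}$ has $(\mathbb{F},\mathbb{G})$-intensity $q^i$, so that $l^i_t = \mathbb{I}_{\{\tau \geq t\}} q^i_t$ and the associated dual predictable projection of $H^{\{i\}}$ (stopped at $\tau$) is $\int_0^{\cdot\wedge\tau} l^i_u\,du$, as recorded in the Lemma preceding this one. Writing $\mathbb{I}_{\{t<\tau=\tau^{\{i\}}\leq T\}} = \int_{]t,T]} \mathbb{I}_{\{\tau>u-\}}\,dH^{\{i\}}_{u\wedge\tau}$ and passing to the compensator, the integrand $B_\tau^{-1}\overline\delta_\tau^i$ (which is $\mathbb{F}$-predictable, being built from $\delta^1$, $C$, $S$ and $B$, all $\mathbb{F}$-adapted or predictable) gets evaluated along the compensator, yielding $\mathbb{E}_{\mathbb{Q}}(\int_t^T B_u^{-1}\overline\delta_u^i l^i_u\,du \mid \mathcal{F}_t)$. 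Finally, since $l^i_u = \mathbb{I}_{\{\tau \geq u\}} q^i_u$ and $\mathbb{E}_{\mathbb{Q}}(\mathbb{I}_{\{\tau\geq u\}}\mid\mathcal{F}_u) = G(u)$ (using continuity of $G$ so that $\{\tau \geq u\}$ and $\{\tau > u\}$ agree up to null sets), a conditioning on $\mathcal{F}_u$ inside the time integral converts $l^i_u$ into $q^i_u G(u)$ — equivalently $l^i_u G(u)$ in the paper's notation where $l^i$ is already written with the indicator. Reassembling with the $G(t)^{-1}$ prefactor from step one gives exactly \eqref{eq:kLeq}.

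The main obstacle is making the compensator argument in step two fully rigorous: one must justify interchanging the $\mathbb{F}$-conditional expectation with the time integral (Fubini, using integrability of $\overline\delta^i$ and boundedness of $B^{-1}$ on $[0,T]$), and one must be careful that $\overline\delta_\tau^i$ depends on $S_\tau$, which is only $\mathbb{G}$-adapted, not $\mathbb{F}$-adapted — however, the intensity-based reduction requires the integrand against $dH^{\{i\}}$ to be $\mathbb{F}$-predictable. The resolution, which I would spell out, is that on $\{\tau = \tau^{\{i\}}\}$ the relevant $S_\tau$ is $S_{\tau-}$ (the pre-default value, which has an $\mathbb{F}$-predictable version since all $\mathbb{F}$-martingales are assumed continuous and $\tau$ avoids $\mathbb{F}$-stopping times), so the substitution is legitimate; alternatively one absorbs this into the already-cited Lemma 3.1 of \cite{Bielecki2008}, which is stated at exactly the level of generality needed. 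Everything else is a routine application of the reduced-form machinery, so I would keep the write-up short and lean on the citation.
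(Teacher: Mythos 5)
The paper offers no proof of this lemma at all: it is stated as ``borrowed from \cite{Bielecki2008} (see Lemma 3.1 therein),'' which is exactly the citation you propose to lean on, and your two-step sketch (the key filtration-switching formula on $\{t<\tau\}$, followed by the compensator/intensity computation that replaces $dH^{\{i\}}_{u\wedge\tau}$ by $l^i_u\,du$ and then $G(u)\,du$ after projecting onto $\mathbb{F}$) is the standard argument underlying that cited result. Your sketch is correct, and your remark that $\overline{\delta}^i$ must be read through its $\mathbb{F}$-predictable (pre-default) version, and that $l^i_uG(u)$ should be read as $q^i_uG(u)$ inside the $\mathcal{F}_t$-conditional expectation, correctly identifies the only notational subtleties in the statement.
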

The pre-default ex-dividend price processes, say $\widetilde{S}$ and $\widetilde{S}^{C}$, are defined as the (unique) $\bF$-adapted processes (cf. \cite{Bielecki2008}) such that
$$S_{t}^{C}={\mathbb{I}}_{\left\{t<\tau \right\} }\widetilde{S}_{t}^{C},\ \ S_{t}={\mathbb{I}}_{\left\{t<\tau \right\} }\widetilde{S}_{t} .$$
\noindent
In view of the above we thus obtain the following result
\begin{lemma} \label{nicerlemma}
We have that, for every $t\in [0,T]$,
\begin{equation}\label{Ife1}
\widetilde{S}_{t}^{C}=\frac{B_{t}}{G(t) }{\mathbb{E}}_{\mathbb{Q}
}\left( \left. \int_{t}^{T}B_{u}^{-1}G\left( u\right) \left(
\sum_{i=1}^{7}l_{u}^{i}\overline{\delta }_{u}^{i}-\kappa \right)
du\right\vert \mathcal{F}_{t}\right),
\end{equation}
and
\begin{equation}\label{Ife2}
\widetilde{S}_{t}=\frac{B_{t}}{G(t) }{\mathbb{E}}_{\mathbb{Q}
}\left( \left. \int_{t}^{T}B_{u}^{-1}G\left( u\right) \left(
\sum_{i=1}^{7}l_{u}^{i}\hat{\delta}_{u}^{i}-\kappa \right) du\right\vert
\mathcal{F}_{t}\right) .
\end{equation}

\end{lemma}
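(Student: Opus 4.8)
The plan is to derive \eqref{Ife1} directly from the representation \eqref{pr2n} of Lemma \ref{ScRep}, term by term, and then obtain \eqref{Ife2} as the special case $R_2=R_3=1$ (equivalently, by repeating the argument with $\hat\delta^i$ in place of $\overline\delta^i$). First I would split the conditional expectation in \eqref{pr2n} into the sum over $i=1,\ldots,7$ of the ``recovery'' terms $B_t\,{\mathbb{E}}_{\mathbb{Q}}\bigl(\,{\mathbb{I}}_{\{t<\tau=\tau^{\{i\}}\le T\}}B_\tau^{-1}\overline\delta_\tau^i\mid\mathcal{G}_t\bigr)$ plus the ``fee leg'' term $-\kappa B_t\,{\mathbb{E}}_{\mathbb{Q}}\bigl(\int_{]t,T]}B_u^{-1}{\mathbb{I}}_{\{\tau>u\}}du\mid\mathcal{G}_t\bigr)$. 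To each recovery term I apply Lemma \ref{kLemma} (equation \eqref{eq:kLeq}), which converts it into ${\mathbb{I}}_{\{t<\tau\}}\frac{B_t}{G(t)}{\mathbb{E}}_{\mathbb{Q}}\bigl(\int_t^T B_u^{-1}l_u^i\overline\delta_u^i G(u)\,du\mid\mathcal{F}_t\bigr)$.

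For the fee leg term I would use the standard key lemma of the reduced-form framework — the same Lemma 3.1 of \cite{Bielecki2008} invoked for Lemma \ref{kLemma} — in its other guise, namely that for an $\mathbb{F}$-predictable (here, progressively measurable) integrand $a$ one has $B_t\,{\mathbb{E}}_{\mathbb{Q}}\bigl(\int_{]t,T]}B_u^{-1}{\mathbb{I}}_{\{\tau>u\}}a_u\,du\mid\mathcal{G}_t\bigr)={\mathbb{I}}_{\{t<\tau\}}\frac{B_t}{G(t)}{\mathbb{E}}_{\mathbb{Q}}\bigl(\int_t^T B_u^{-1}G(u)a_u\,du\mid\mathcal{F}_t\bigr)$; applying this with $a\equiv\kappa$ handles the premium leg. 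Summing the seven recovery contributions and the fee leg, and pulling the common factor ${\mathbb{I}}_{\{t<\tau\}}\frac{B_t}{G(t)}$ and the common integrator $B_u^{-1}G(u)\,du$ out front, yields exactly ${\mathbb{I}}_{\{t<\tau\}}\frac{B_t}{G(t)}{\mathbb{E}}_{\mathbb{Q}}\bigl(\int_t^T B_u^{-1}G(u)\bigl(\sum_{i=1}^7 l_u^i\overline\delta_u^i-\kappa\bigr)du\mid\mathcal{F}_t\bigr)$. Comparing this with the defining relation $S_t^C={\mathbb{I}}_{\{t<\tau\}}\widetilde S_t^C$ and invoking uniqueness of the $\mathbb{F}$-adapted pre-default price (cf. \cite{Bielecki2008}) identifies the $\mathbb{F}$-adapted factor as $\widetilde S_t^C$, giving \eqref{Ife1}. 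The identity \eqref{Ife2} follows by the Corollary's observation that $S$ is the $R_2=R_3=1$ specialization of $S^C$, so that the $\overline\delta^i$ are replaced by $\hat\delta^i$ throughout and the same computation applies; alternatively one notes \eqref{pr1n} has the identical structure to \eqref{pr2n}.

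The main obstacle — really the only nontrivial point — is the passage from ``$S^C_t$ equals ${\mathbb{I}}_{\{t<\tau\}}$ times this $\mathcal{F}_t$-measurable-looking expression'' to ``this expression \emph{is} $\widetilde S^C_t$'': one must check that $\frac{B_t}{G(t)}{\mathbb{E}}_{\mathbb{Q}}\bigl(\int_t^T\cdots\mid\mathcal{F}_t\bigr)$ is genuinely $\mathbb{F}$-adapted (which it is, being an $\mathbb{F}$-conditional expectation of an $\mathbb{F}$-progressively-measurable integrand, using continuity of $G$ and of $\mathbb{F}$-martingales as assumed) and then invoke the uniqueness statement cited from \cite{Bielecki2008} that characterizes the pre-default price. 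The rest is bookkeeping: tracking the seven indicator sets $\{\tau=\tau^{\{i\}}\}$, confirming that Lemma \ref{kLemma} applies to each (its hypotheses cover all $i=1,\ldots,7$ uniformly), and collecting terms.
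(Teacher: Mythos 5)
Your proposal is correct and follows essentially the same route as the paper: decompose the representation from Lemma \ref{ScRep} into the seven recovery terms (handled by Lemma \ref{kLemma}) plus the fee leg, reduce everything to an $\mathbb{F}$-conditional expectation with the common factor $\mathbb{I}_{\{t<\tau\}}B_t/G(t)$, and identify the pre-default price by uniqueness, with \eqref{Ife2} as the $R_2=R_3=1$ specialization. The only difference is cosmetic: for the premium leg you invoke the survival-indicator form of the key lemma directly under the time integral, whereas the paper introduces the finite-variation process $Y_s=-\kappa\int_{]t,s]}B_u^{-1}du$, splits the expectation over $\{\tau>T\}$ and $\{t<\tau\leq T\}$, and recombines via integration by parts --- the two computations yield the same identity.
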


\begin{proof}
From Lemma \ref{ScRep} we have that
\begin{equation*}
S_{t}^{C}=B_{t}{\mathbb{E}}_{\mathbb{Q}}\left( \left. B_{{\tau }
}^{-1}\sum_{i=1}^{7}{\mathbb{I}}_{\left\{ t<{\tau =\tau }^{\left\{ i\right\}
}\leq T\right\} }\overline{\delta }_{t}^{i}\right\vert \mathcal{G}
_{t}\right) -\kappa B_{t}{\mathbb{E}}_{\mathbb{Q}}\left( \left. \int_{\left]
t,T\right] \ \ }B_{u}^{-1}{\mathbb{I}}_{\left\{ {\tau }>u\right\}
}du\right\vert \mathcal{G}_{t}\right) .
\end{equation*}
Now, in view of \eqref{eq:kLeq} we see that
$$B_{t}{\mathbb{E}}_{\mathbb{Q}}\left( \left. B_{{\tau }
}^{-1}\sum_{i=1}^{7}{\mathbb{I}}_{\left\{ t<{\tau =\tau }^{\left\{ i\right\}
}\leq T\right\} }\overline{\delta }_{t}^{i}\right\vert \mathcal{G}
_{t}\right)={\mathbb{I}}_{\left\{ t<{
\tau }\right\} }\frac{B_{t}}{G(t) }{\mathbb{E}}_{\mathbb{Q}}\left( \left. \sum_{i=1}^{7} \int_{t}^{T}B_{u}^{-1}l_{u}^{i}\overline{\delta }
_{u}^{i}G\left( u\right) du\right\vert \mathcal{F}_{t}\right). $$

Let us now fix $t\geq 0$, and define $Y_{s}:=-\kappa \int_{]t,s]}B_{u}^{-1}du$ for $s\geq t.$  Thus, we get
\begin{align*}
-\kappa B_{t}{\mathbb{E}}_{\mathbb{Q}}\left( \left. \int_{\left] t,T\right]
\ \ }B_{u}^{-1}{\mathbb{I}}_{\left\{ {\tau }>u\right\} }du\right\vert
\mathcal{G}_{t}\right) &=B_{t}{\mathbb{E}}_{\mathbb{Q}}\left( \left. {\mathbb{
I}}_{\left\{ {\tau }>T\right\} }Y_{T}\right\vert \mathcal{G}_{t}\right) \\
& \ \ \ +B_{t}{\mathbb{E}}_{\mathbb{Q}}\left( \left. {\mathbb{I}}_{\left\{ t<\tau
\leq T\right\} }Y_{\tau }\right\vert \mathcal{G}_{t}\right).
\end{align*}
It is known from \cite{Bielecki2008}, that
\begin{equation*}
B_{t}{\mathbb{E}}_{\mathbb{Q}}\left( \left. {\mathbb{I}}_{\left\{ t<\tau
\leq T\right\} }Y_{\tau }\right\vert \mathcal{G}_{t}\right) = -{\mathbb{I}}_{\left\{t<\tau
\right\} }\frac{B_{t}}{G(t)}{\mathbb{E}}_{\mathbb{Q}}\left( \left. \int_{t}^{T}Y_{u}dG(u) \right\vert \mathcal{F}_{t}\right)
\end{equation*}
and
\begin{equation*}
B_{t}{\mathbb{E}}_{\mathbb{Q}}\left( \left. {\mathbb{I}}_{\left\{ {\tau }>T\right\} }Y_{T}\right\vert \mathcal{G}_{t}\right) = {\mathbb{I}}_{\left\{t<\tau \right\} }\frac{B_{t}}{G(t)}{\mathbb{E}}_{\mathbb{Q}}\left( \left. G(T)Y_{T}\right\vert \mathcal{F}_{t}\right).
\end{equation*}
Finally, since $Y$ is of finite variation, (\ref{Ife1}) follows by applying the integration by parts formula
\begin{equation*}
G(t) Y_{T}-\int_{t}^{T}Y_{s}dG\left( s\right)=\int_{t}^{T}G\left( s\right) dY_{s}=-\kappa \int_{t}^{T}G\left( s\right)B_{u}^{-1}du.
\end{equation*}
Equality (\ref{Ife2}) is obtained as a special case of (\ref{Ife1}), by setting $R_2=R_3=1$.
\end{proof}

We are ready now to derive dynamics of the pre-default price processes, that we shall use in order to derive the dynamics of the CVA process.
\begin{lemma}\label{Inl}
(i) The pre-default ex-dividend price of a counterparty risky CDS contract follows the dynamics given as
\begin{align*}
d\widetilde{S}_{t}^{C} &=\left( \left( r_{t}+l_{t}\right) \widetilde{S}_{t}^{C}-\left( \sum_{i=1}^{7}l_{t}^{i}\overline{\delta }_{t}^{i}-\kappa\right) \right) dt+G^{-1}(t) \left( B_{t}dm_{t}^{C}-\widetilde{S}_{t}^{C}d\mu \right) \\
& \ \ \ +G^{-2}(t) \left( \widetilde{S}_{t}^{C}d\left\langle \mu\right\rangle _{t}-B_{t}d\left\langle \mu ,m^{C}\right\rangle _{t}\right), \ t\in [0,T],
\end{align*}
where
\begin{equation*}
m_{t}^{C}={\mathbb{E}}_{\mathbb{Q}}\left( \left.
\int_{0}^{T}B_{u}^{-1}G\left( u\right) \left( \sum_{i=1}^{7}l_{u}^{i}
\overline{\delta }_{u}^{i}-\kappa \right) du\right\vert \mathcal{F}
_{t}\right)
\end{equation*}
(ii) The pre-default ex-dividend price of a counterparty risk-free CDS contract follows the dynamics given as
\begin{align*}
d\widetilde{S}_{t} &= \left( \left( r_{t}+l_{t}\right) \widetilde{S}_{t}-\left( \sum_{i=1}^{7}l_{t}^{i}\hat{\delta}_{{t}}^{i}-\kappa \right)
\right) dt+G^{-1}(t) \left( B_{t}dm_{t}-\widetilde{S}_{t}d\mu\right) \\
& \ \ \ +G^{-2}(t) \left( \widetilde{S}_{t}d\left\langle \mu\right\rangle _{t}-B_{t}d\left\langle \mu ,m\right\rangle _{t}\right),\ t\in [0,T],
\end{align*}
where
\begin{equation*}
m_{t}={\mathbb{E}}_{\mathbb{Q}}\left( \left. \int_{0}^{T}B_{u}^{-1}G\left(
u\right) \left( \sum_{i=1}^{7}l_{u}^{i}\hat{\delta}_{{u}}^{i}-\kappa \right)
du\right\vert \mathcal{F}_{t}\right) .
\end{equation*}
\end{lemma}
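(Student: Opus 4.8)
The plan is to read off the representation of $\widetilde S^{C}$ from Lemma \ref{nicerlemma}, write it as a product of three continuous semimartingales, and apply the It\^o product rule. Set
$$N_{t}:={\mathbb{E}}_{\mathbb{Q}}\left(\left.\int_{t}^{T}B_{u}^{-1}G(u)\left(\sum_{i=1}^{7}l_{u}^{i}\overline{\delta}_{u}^{i}-\kappa\right)du\,\right\vert\mathcal{F}_{t}\right),$$
so that $\widetilde S_{t}^{C}=B_{t}G(t)^{-1}N_{t}$. Writing $N_{t}=m_{t}^{C}-\int_{0}^{t}B_{u}^{-1}G(u)\big(\sum_{i=1}^{7}l_{u}^{i}\overline{\delta}_{u}^{i}-\kappa\big)\,du$ and invoking the standing assumption that all $\mathbb{F}$-martingales are continuous, we see that $N$ is a continuous $\mathbb{F}$-semimartingale with martingale part $m^{C}$ and predictable finite-variation part $-\int_{0}^{\cdot}B_{u}^{-1}G(u)\big(\sum_{i}l_{u}^{i}\overline{\delta}_{u}^{i}-\kappa\big)\,du$.

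Next I would compute the dynamics of $G(t)^{-1}$. Since $G=\mu-\nu$ with $d\nu_{t}=v_{t}\,dt$, $\mu$ a continuous martingale, $G$ continuous and strictly positive, It\^o's formula applied to $x\mapsto x^{-1}$ gives $d(G(t)^{-1})=-G(t)^{-2}\,d\mu_{t}+G(t)^{-2}v_{t}\,dt+G(t)^{-3}\,d\langle\mu\rangle_{t}$. Then apply the product rule to $\widetilde S_{t}^{C}=B_{t}\cdot\big(G(t)^{-1}N_{t}\big)$: because $B$ is continuous and of finite variation with $dB_{t}=r_{t}B_{t}\,dt$, no covariation term with $B$ appears, so $d\widetilde S_{t}^{C}=r_{t}\widetilde S_{t}^{C}\,dt+B_{t}\,d\big(G(t)^{-1}N_{t}\big)$, and a second application of the product rule yields $d\big(G(t)^{-1}N_{t}\big)=G(t)^{-1}\,dN_{t}+N_{t}\,d(G(t)^{-1})+d\langle G^{-1},N\rangle_{t}$, where $d\langle G^{-1},N\rangle_{t}=-G(t)^{-2}\,d\langle\mu,m^{C}\rangle_{t}$ since the martingale part of $G^{-1}$ is $-\int G^{-2}\,d\mu$ and that of $N$ is $m^{C}$.

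Finally, substitute $dN_{t}$, $d(G(t)^{-1})$ and the covariation term, multiply through by $B_{t}$, and simplify using the identities $\widetilde S_{t}^{C}=B_{t}G(t)^{-1}N_{t}$ (whence $B_{t}N_{t}G(t)^{-2}=G(t)^{-1}\widetilde S_{t}^{C}$ and $B_{t}N_{t}G(t)^{-3}=G(t)^{-2}\widetilde S_{t}^{C}$) together with $v_{t}=l_{t}G(t)$, which turns the $B_{t}N_{t}G(t)^{-2}v_{t}\,dt$ contribution into $l_{t}\widetilde S_{t}^{C}\,dt$. Collecting terms, the $dt$ part becomes $\big((r_{t}+l_{t})\widetilde S_{t}^{C}-(\sum_{i}l_{t}^{i}\overline{\delta}_{t}^{i}-\kappa)\big)\,dt$, the $dm^{C}$ and $d\mu$ terms combine into $G(t)^{-1}\big(B_{t}\,dm_{t}^{C}-\widetilde S_{t}^{C}\,d\mu_{t}\big)$, and the bracket terms into $G(t)^{-2}\big(\widetilde S_{t}^{C}\,d\langle\mu\rangle_{t}-B_{t}\,d\langle\mu,m^{C}\rangle_{t}\big)$, which is exactly (i). Part (ii) follows, just as in the proof of Lemma \ref{nicerlemma}, by specializing to $R_{2}=R_{3}=1$, under which $\overline{\delta}^{i}$ becomes $\hat{\delta}^{i}$, $m^{C}$ becomes $m$ and $\widetilde S^{C}$ becomes $\widetilde S$. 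I expect the only delicate point to be the bookkeeping of the quadratic- and cross-variation terms across the two nested applications of the product rule: continuity of all $\mathbb{F}$-martingales is what rules out extra jump contributions, and the relations $v_{t}=l_{t}G(t)$ and $\widetilde S_{t}^{C}=B_{t}G(t)^{-1}N_{t}$ are precisely what collapse the raw It\^o output into the stated compact form.
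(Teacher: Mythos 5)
Your proposal is correct and follows essentially the same route as the paper: both start from the representation $\widetilde S_{t}^{C}=B_{t}G^{-1}(t)U_{t}$ with $U_{t}=m_{t}^{C}-\int_{0}^{t}B_{u}^{-1}G(u)\bigl(\sum_{i}l_{u}^{i}\overline{\delta}_{u}^{i}-\kappa\bigr)du$ furnished by Lemma \ref{nicerlemma}, apply It\^o/product rules to $G^{-1}(t)U_{t}$ and then to the product with $B_{t}$, and collapse the terms via $d\nu_{t}=v_{t}dt=l_{t}G(t)dt$ and $B_{t}U_{t}G^{-2}(t)=G^{-1}(t)\widetilde S_{t}^{C}$. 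Your only cosmetic difference is that you make the intermediate step $d\bigl(G^{-1}(t)\bigr)=-G^{-2}(t)d\mu_{t}+G^{-2}(t)v_{t}dt+G^{-3}(t)d\langle\mu\rangle_{t}$ explicit, whereas the paper writes the It\^o expansion of $G^{-1}(t)U_{t}$ in one step; the content is identical, including the treatment of part (ii) by specializing $R_{2}=R_{3}=1$.
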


\begin{proof} The argument below follows the one in the proof of Proposition 1.2 in \cite{Bielecki2008}.

In view of (\ref{Ife1}) we may write $\widetilde{S}_{t}^{C}$ as
\begin{equation*}
\widetilde{S}_{t}^{C}=B_{t}G^{-1}(t) U_{t},
\end{equation*}
where
\begin{equation*}
U_{t}=m_{t}^{C}-\int_{0}^{t}B_{u}^{-1}G\left( u\right) \left(\sum_{i=1}^{7}l_{u}^{i}\overline{\delta }_{u}^{i}-\kappa \right) du.
\end{equation*}
Since $G=\mu -v$, then applying It\^{o}'s formula one obtains
\begin{align*}
d\left( G^{-1}(t) U_{t}\right) &= G^{-1}(t)dm_{t}^{C}-B_{t}^{-1}\left( \sum_{i=1}^{7}l_{t}^{i}\overline{\delta }_{t}^{i}-\kappa \right) dt \\
& \ \ \ +U_{t}\left( G^{-3}(t) d\left\langle\mu \right\rangle _{t}-G^{-2}(t) \left( d\mu _{t}-dv_{t}\right)\right) \\
& \ \ \ -G^{-2}(t) d\left\langle \mu ,m^{C}\right\rangle _{t}.
\end{align*}
Consequently,

\begin{align*}
d\widetilde{S}_{t}^{C} &=B_{t}G^{-1}(t) dm_{t}^{C}-\left(\sum_{i=1}^{7}l_{t}^{i}\overline{\delta }_{t}^{i}-\kappa \right) dt & \ \ \ & \ \ \ \\ & \ \ \ +B_{t}U_{t}\left( G^{-3}(t) d\left\langle \mu \right\rangle_{t}-G^{-2}(t) \left( d\mu _{t}-l_{t}G(t) dt\right)\right) \\
& \ \ \ -B_{t}G^{-2}(t) d\left\langle \mu ,m^{C}\right\rangle+r_{t}B_{t}G^{-1}(t) U_{t}dt \\
&=\left( \left( r_{t}+l_{t}\right) \widetilde{S}_{t}^{C}-\left( \sum_{i=1}^{7}l_{t}^{i}\overline{\delta }_{t}^{i}-\kappa\right) \right)dt
+G^{-1}(t) \left( B_{t}dm_{t}^{C}-\widetilde{S}_{t}d\mu \right) \\
& \ \ \ +G^{-2}(t) \left( \widetilde{S}_{t}d\left\langle \mu\right\rangle _{t}-B_{t}d\left\langle \mu ,m^{C}\right\rangle _{t}\right),
\end{align*}
which verifies the result stated in (i).

Starting from (\ref{Ife2}), and using computations analogous to the ones done in (i), one can derive the result stated in (ii).
\end{proof}

The dynamics of the CVA process are easily derived with help of the above lemma,

\begin{proposition}\label{Inp}
The bilateral CVA process satisfies,
\begin{align*}
d\text{CVA}_{t} &=r_{t}\text{CVA}_{t}dt-\text{CVA}_{t-}dM_{t}-\left(1-H_{t}\right) \left( \sum_{i=1}^{7}l_{t}^{i}\xi _{t}^{i}\right) dt \\
& \ \ \ +\left(1-H_{t}\right) B_{t}G^{-1}(t) dn_{t}-G^{-1}(t) \text{CVA}_{t}d\mu _{t}+G^{-2}(t) \text{CVA}_{t}d\left\langle \mu \right\rangle _{t} \\
& \ \ \ -\left( 1-H_{t}\right)G^{-2}(t) B_{t}\left( d\left\langle \mu ,m\right\rangle_{t}-d\left\langle \mu ,m^{C}\right\rangle _{t}\right) ,
\end{align*}
where
\begin{equation*}
n_{t}={\mathbb{E}}_{\mathbb{Q}}\left( \left. \int_{0}^{T}B_{u}^{-1}G\left(
u\right) \left( \sum_{i=1}^{7}l_{u}^{i}\xi _{u}^{i}\right) du\right\vert
\mathcal{F}_{t}\right) ,\ t\in [ 0,T] ,
\end{equation*}
with
\begin{eqnarray*}
&&\xi _{t}^{1}=0,\text{ }\xi _{t}^{2}=\left( 1-R_{2}\right) \left( S_{{t}
}-C_{{t}}\right) ^{+},\text{ }\xi _{t}^{3}=-\left( 1-R_{3}\right) \left( S_{{
t}}-C_{{t}}\right) ^{-}, \\
&&\xi _{t}^{4}=\left( 1-R_{2}\right) \left( S_{{t}}-C_{{t}}\right) ^{+}\text{
}-\left( 1-R_{3}\right) \left( S_{{t}}-C_{{t}}\right) ^{-}, \\
&&\xi _{t}^{5}=\left( 1-R_{2}\right) \left( \delta _{{t}}^{1}-C_{{t}}\right)
^{+},\text{ }\xi _{t}^{6}=-\left( 1-R_{3}\right) \left( \delta _{{t}}^{1}-C_{
{t}}\right) ^{-}, \\
&&\xi _{t}^{7}=\left( 1-R_{2}\right) \left( \delta _{{t}}^{1}-C_{t}\right)
^{+}-\left( 1-R_{3}\right) \left( \delta _{{t}}^{1}-C_{{t}}\right) ^{-}.
\end{eqnarray*}

\end{proposition}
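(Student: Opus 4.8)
The plan is to reduce everything to the pre-default price dynamics of Lemma~\ref{Inl}. First I would introduce the \emph{pre-default CVA process} $\widetilde{\text{CVA}}_t := \widetilde{S}_t - \widetilde{S}_t^{C}$. Since $S_t = \mathbb{I}_{\{t<\tau\}}\widetilde{S}_t$ and $S_t^{C} = \mathbb{I}_{\{t<\tau\}}\widetilde{S}_t^{C}$, the definition \eqref{CVAdef} gives $\text{CVA}_t = (1-H_t)\widetilde{\text{CVA}}_t$; moreover $\widetilde{\text{CVA}}$ is $\mathbb{F}$-adapted and continuous, because under the standing assumptions $G$, $\mu$, $m$ and $m^{C}$ are all continuous.

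Next I would subtract the two dynamics in Lemma~\ref{Inl}. Every term there (the $dt$-drift, the $d\mu$, the $d\langle\mu\rangle$, and the martingale increment) enters linearly, so $d\widetilde{\text{CVA}}_t = d\widetilde{S}_t - d\widetilde{S}_t^{C}$ is obtained termwise once one records the elementary bookkeeping identities $\hat\delta_t^{i} - \overline\delta_t^{i} = \xi_t^{i}$ for $i=1,\dots,7$ (a one-line check from the explicit formulas for $\hat\delta^{i}$ in the corollary following Lemma~\ref{ScRep}, for $\overline\delta^{i}$ in Lemma~\ref{ScRep}, and for $\xi^{i}$ in the statement), together with $m_t - m_t^{C} = n_t$ and, by bilinearity of the quadratic covariation, $\langle\mu,m\rangle_t - \langle\mu,m^{C}\rangle_t = \langle\mu,n\rangle_t$. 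This produces
\begin{align*}
d\widetilde{\text{CVA}}_t &= \Big( (r_t+l_t)\widetilde{\text{CVA}}_t - \sum_{i=1}^{7} l_t^{i}\xi_t^{i}\Big)dt + G^{-1}(t)\big( B_t\,dn_t - \widetilde{\text{CVA}}_t\,d\mu_t\big) \\
& \ \ \ + G^{-2}(t)\big( \widetilde{\text{CVA}}_t\,d\langle\mu\rangle_t - B_t\,d\langle\mu,n\rangle_t\big).
\end{align*}

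Then I would apply integration by parts to $\text{CVA}_t = (1-H_t)\widetilde{\text{CVA}}_t$. Since $H$ is a pure-jump process and $\widetilde{\text{CVA}}$ is continuous, $[\,1-H,\widetilde{\text{CVA}}\,] \equiv 0$, hence $d\text{CVA}_t = (1-H_{t-})\,d\widetilde{\text{CVA}}_t - \widetilde{\text{CVA}}_{t-}\,dH_t$, and $\widetilde{\text{CVA}}_{t-} = \text{CVA}_{t-}$ on $\{t\le\tau\}$ by continuity. Substituting the display above and using $(1-H_t)\widetilde{\text{CVA}}_t = \text{CVA}_t$ collapses the $d\mu$ and $d\langle\mu\rangle$ contributions into $-G^{-1}(t)\text{CVA}_t\,d\mu_t + G^{-2}(t)\text{CVA}_t\,d\langle\mu\rangle_t$, while the remaining increments keep the factor $(1-H_t)$ as stated. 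Finally I would split $(r_t+l_t)\widetilde{\text{CVA}}_t\,dt$: the $r_t$-piece gives $r_t\text{CVA}_t\,dt$, whereas the piece $l_t\text{CVA}_t\,dt = (1-H_{t-})l_t\widetilde{\text{CVA}}_t\,dt$ combines with $-\widetilde{\text{CVA}}_{t-}\,dH_t = -\text{CVA}_{t-}\,dH_t$ to yield $-\text{CVA}_{t-}\,dM_t$, because $M_t = H_t - \int_0^{t\wedge\tau} l_u\,du$ satisfies $dM_t = dH_t - (1-H_{t-})l_t\,dt$. Collecting the terms and rewriting $d\langle\mu,n\rangle_t = d\langle\mu,m\rangle_t - d\langle\mu,m^{C}\rangle_t$ reproduces the asserted formula.

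The only genuinely delicate point — hence the main obstacle — is the passage across the first default time $\tau$: one must track that $\text{CVA}$, $S$, $S^{C}$ all vanish on $\{\tau\le t\}$ while their pre-default versions do not, that the jump of $\text{CVA}$ at $\tau$ is exactly $-\text{CVA}_{\tau-}$, and that $(1-H_{t-})$ may be replaced by $(1-H_t)$ inside the absolutely continuous integrals. All of this is carried out exactly as in the proof of Proposition~1.2 of \cite{Bielecki2008}, on which Lemma~\ref{Inl} is already modeled.
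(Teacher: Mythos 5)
Your proposal is correct and follows essentially the same route as the paper: apply integration by parts to $\text{CVA}_t=(1-H_t)(\widetilde S_t-\widetilde S_t^{C})$, substitute the two dynamics from Lemma~\ref{Inl}, and collect terms using $\hat\delta^i-\overline\delta^i=\xi^i$ and $m-m^{C}=n$. The only difference is that you spell out the compensation step turning $l_t\,\text{CVA}_t\,dt-\text{CVA}_{t-}\,dH_t$ into $-\text{CVA}_{t-}\,dM_t$, which the paper leaves implicit.
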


\begin{proof}
Applying the integration by parts formula we get that
\begin{equation*}
d\text{CVA}_{t}=\left( 1-H_{t}\right) \left( d\widetilde{S}_{t}-d\widetilde{S
}_{t}^{C}\right) -\left( \widetilde{S}_{t}-\widetilde{S}_{t}^{C}\right)
dH_{t}.
\end{equation*}
This together with Lemma \ref{Inl} implies
\begin{align*}
d\text{CVA}_{t} &=-\left( S_{t-}-S_{t-}^{C}\right) dM_{t}+\left(1-H_{t}\right) \left( r_{t}\left( S_{t}-S_{t}^{C}\right)-\sum_{i=1}^{7}l_{t}^{i}\left( \hat{\delta}_{{t}}^{i}-\overline{\delta }_{t}^{i}\right) \right) dt \\
& \ \ \ +\left( 1-H_{t}\right)B_{t}G^{-1}(t)\left(dm_{t}-dm_{t}^{C}\right)-\left( 1-H_{t}\right)G^{-1}(t)\left(S_{t}-S_{t}^{C}\right)d\mu _{t} \\
& \ \ \ +\left( 1-H_{t}\right) G^{-2}(t) \left( S_{t}-S_{t}^{C}\right)d\left\langle \mu \right\rangle _{t} \\
& \ \ \ -\left( 1-H_{t}\right) G^{-2}(t) B_{t}\left( d\left\langle \mu ,m\right\rangle _{t}-d\left\langle\mu ,m^{C}\right\rangle _{t}\right),
\end{align*}
which proves the result.
\end{proof}

\subsubsection{Dynamics of CVA when the immersion property holds}

 Here we adapt the results derived above to the case when the immersion property holds between filtrations $\bF$ and $\bG$, that is the case when every $\bF$-martingale is a $\bG$-martingale under $\mathbb{Q}$. In this case, the continuous martingale $\mu $ in the Doob-Meyer decomposition of $G$ vanishes, so that the survival process $G$ is a non-increasing process represented as $G=-v.$  Frequently, the immersion property is referred to as Hypothesis $\left(\mathcal{H}\right) $. For an excellent discussion of the immersion property we refer to \cite{JEANBLANC2009}.

\begin{assumption}\label{a1}
Hypothesis $\left(\mathcal{H}\right) $ holds between the filtrations $\mathbb{F}$ and $\mathbb{G}$ under $\mathbb{Q}$.
\end{assumption}

In view of the results (and the notation) from Proposition \ref{Inp} we obtain
\begin{corollary}
Assume that Assumption \ref{a1} is satisfied. Then,
\begin{align*}
d\text{CVA}_{t} &=r_{t}\text{CVA}_{t}dt-\text{CVA}_{t-}dM_{t}-\left(1-H_{t}\right) \left( \sum_{i=1}^{7}l_{t}^{i}\xi _{t}^{i}\right) dt \\
& \ \ \ +\left(1-H_{t}\right) B_{t}G^{-1}(t) dn_{t}, \ t\in [0,T].
\end{align*}

\end{corollary}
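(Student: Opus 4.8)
The plan is to obtain the Corollary as a direct specialization of Proposition~\ref{Inp}, once the effect of Hypothesis~$(\mathcal{H})$ on the Doob--Meyer decomposition of $G$ is taken into account. The one structural fact I would invoke is the one recalled just before the statement (and discussed in \cite{JEANBLANC2009}): under Assumption~\ref{a1} the process $G(t)=\mathbb{Q}(\tau>t\mid\mathcal{F}_t)$ is $\bF$-adapted and non-increasing, hence of finite variation, so the continuous martingale part $\mu$ in its Doob--Meyer decomposition must be constant; equivalently $d\mu_t\equiv 0$. No further information about the dynamics of $S$, $S^C$, $m$, or $m^C$ is needed.

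Given this, I would carry $d\mu_t=0$ through the general CVA dynamics of Proposition~\ref{Inp}. Since $\mu$ is (a.s.) constant, its quadratic variation $\langle\mu\rangle$ and its cross-variations $\langle\mu,m\rangle$, $\langle\mu,m^C\rangle$ with the continuous $\bF$-martingales $m$ and $m^C$ vanish identically. Hence the three terms
\[
-G^{-1}(t)\,\text{CVA}_t\,d\mu_t,\qquad
G^{-2}(t)\,\text{CVA}_t\,d\langle\mu\rangle_t,\qquad
-(1-H_t)\,G^{-2}(t)\,B_t\bigl(d\langle\mu,m\rangle_t-d\langle\mu,m^C\rangle_t\bigr)
\]
are each identically zero, while the definitions of $n$, of the $l^i$, and of the $\xi^i$ are untouched. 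What remains is precisely
\[
d\text{CVA}_t=r_t\,\text{CVA}_t\,dt-\text{CVA}_{t-}\,dM_t-(1-H_t)\Bigl(\sum_{i=1}^{7}l_t^i\xi_t^i\Bigr)dt+(1-H_t)\,B_t\,G^{-1}(t)\,dn_t,
\]
which is the asserted formula.

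I do not anticipate a genuine obstacle: the argument reduces to a substitution, and the only step worth spelling out is the justification that $d\mu\equiv 0$ under Hypothesis~$(\mathcal{H})$, i.e.\ that a supermartingale which is simultaneously of finite variation has trivial martingale part. As an alternative, longer route one could bypass Proposition~\ref{Inp} altogether and re-run the computation from scratch: with $G$ absolutely continuous and purely decreasing, Lemma~\ref{Inl} collapses to $d\widetilde{S}_t^C=\bigl((r_t+l_t)\widetilde{S}_t^C-(\sum_{i=1}^{7}l_t^i\overline{\delta}_t^i-\kappa)\bigr)dt+B_t\,G^{-1}(t)\,dm_t^C$ and the analogous identity for $\widetilde{S}_t$, after which the integration-by-parts step of Proposition~\ref{Inp} gives the same conclusion; but since this merely reproves the $\mu=0$ case of results already in hand, the specialization route is the efficient one and is what I would present.
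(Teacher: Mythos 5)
Your proposal is correct and matches the paper's own (implicit) argument: the paper states the corollary ``in view of'' Proposition~\ref{Inp} after noting that under Hypothesis~$(\mathcal{H})$ the continuous martingale $\mu$ in the Doob--Meyer decomposition of $G$ vanishes, which is exactly the substitution $d\mu\equiv 0$ (hence $d\langle\mu\rangle=d\langle\mu,m\rangle=d\langle\mu,m^{C}\rangle=0$) that you carry out. Your added justification that a finite-variation supermartingale has trivial martingale part is a sensible way to make explicit what the paper simply asserts with a reference to \cite{JEANBLANC2009}.
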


\begin{remark}
If we assume that the filtration $\mathbb{F}$ is generated by a Brownian motion, then, in view of the Brownian martingale representation theorem,  there exists an $\mathbb{F}$-predictable process $\zeta $\ such that $dn_{t}=\zeta _{t}dW_{t}$ and
\begin{align*}
d\text{CVA}_{t}&=r_{t}\text{CVA}_{t}dt-\text{CVA}_{t-}dM_{t}-\left(1-H_{t}\right) \left( \sum_{i=1}^{7}l_{t}^{i}\xi _{t}^{i}\right) dt \\
& \ \ \ +\left(1-H_{t}\right) B_{t}G^{-1}(t) \zeta _{t}dW_{t}.
\end{align*}
\end{remark}

\subsection{Fair Spread Value Adjustment}\label{sec:SVA}

Let us fix $t\in [0,T],$ and let us denote by $\kappa _{t}$ the market spread of the counterparty risk-free CDS contract at time $t$; that is, $\kappa _{t}$ is this level of spread  that makes the pre-default values of the two legs of a counterparty risk-free CDS contract equal to each other at time $t$,
\begin{equation}
\widetilde S_{{t}}\left( \kappa _{t}\right)=0. \label{sp1}
\end{equation}
It is convenient to write the above equation in the form that is common in practice:
\begin{equation}
 PL_{t}-\kappa _{t}RDV01_{t}=0  \label{sp11},
\end{equation}
 where $PL$ and $RDV01$ are processes representing (pre-default) values of the protection leg and the risky annuity, respectively, so that$\, $\footnote{We note that formula (\ref{PL1}) provides a representation of $PL_t$, which is convenient for our purposes. The traditional representation of $PL_t$, typically used in the context of counterparty risk free CDS contracts is $$PL_{t}=\frac{B_{t}}{G^{1}\left(
t\right) }{\mathbb{E}}_{\mathbb{Q}}\left( \left. \int_{\left] t,T\right] }B_{
{u}}^{-1}G^{1}\left( u\right) \delta _{{u}}^{1}\lambda^1_u du\right\vert \mathcal{F}_{{t}}\right),$$ where $\lambda^1$ is the $\bF$ intensity of $\tau_1$. }

\begin{equation}\label{PL1}
PL_{t}=\frac{B_{t}}{G^{1}\left(
t\right) }{\mathbb{E}}_{\mathbb{Q}}\left( \left. \int_{\left] t,T\right] }B_{
{u}}^{-1}G^{1}\left( u\right) \delta _{{u}}^{1}\left(
\sum_{i=1,5,6,7}l_{u}^{i}\right) du\right\vert \mathcal{F}_{{t}}\right) ,
\end{equation}
and
\begin{equation}\label{PL2}
RDV01_{t}=\frac{B_{t}}{
G^{1}(t) }{\mathbb{E}}_{\mathbb{Q}}\left( \left. \int_{]
t,T] }B_{u}^{-1}G^{1}\left( u\right) du\right\vert
\mathcal{F}_{{t}}\right),
\end{equation}
where
\begin{equation*}
G^{1}(t) =\mathbb{Q}\left( \left. \tau _{1}>t\right\vert
\mathcal{F}_{t}\right) .
\end{equation*}

Therefore, we get,
\begin{equation}
\kappa _{t}=\frac{{\mathbb{E}}_{\mathbb{Q}}\left( \left. \int_{\left] t,T
\right] }B_{{u}}^{-1}G^{1}\left( u\right) \delta _{{u}}^{1}\left(
\sum_{i=1,5,6,7}l_{u}^{i}\right) du\right\vert \mathcal{F}_{{t}}\right) }{{
\mathbb{E}}_{\mathbb{Q}}\left( \left. \int_{\left] t,T\wedge {\tau }_{{1}}
\right] }B_{u}^{-1}G^{1}\left( u\right) du\right\vert \mathcal{F}_{{t}
}\right) }.  \label{kt}
\end{equation}

We denote by $\kappa _{t}^{C}$ the spread which makes the values of the two pre-first-default legs of a counterparty risky CDS contract equal to each other at every $t\in
\left[ 0,T\right] $ as
\begin{equation}
\widetilde S_{{t}}^{C}\left( \kappa _{t}^{C}\right) =PL_{t}^{C}-\kappa
_{t}^{C}RDV01_{t}^{C}=0.  \label{sp2}
\end{equation}

Similarly, we use the spread $\kappa _{0}^{C}$ initiated at time $t=0$ in order to compute the fair price of a counterparty risky CDS contract at any time $t\in \left[ 0,T\right] $. Using Lemma 3.1, $\kappa _{t}^{C}$ admits
the following representation for every $t\in \left[ 0,T\right] ,$
\begin{equation*}
\kappa _{t}^{C}=\frac{PL_{t}^{C}}{RDV01_{t}^{C}},
\end{equation*}
where

\begin{equation}
PL_{t}^{C}=\frac{B_{t}}{G\left(
t\right) }{\mathbb{E}}_{\mathbb{Q}}\left( \left.
\int_{t}^{T}B_{u}^{-1}G\left( u\right) \left( \sum_{i=1}^{7}l_{u}^{i}
\overline{\delta }_{u}^{i}\right) du\right\vert \mathcal{F}_{t}\right)
\end{equation}
and
\begin{equation}
RDV01_{t}^{C}=\frac{B_{t}}{G\left(
t\right) }{\mathbb{E}}_{\mathbb{Q}}\left( \left. \int_{] t,T] }B_{u}^{-1}G\left( u\right) du\right\vert \mathcal{F}_{{t}
}\right) .  \label{FLc}
\end{equation}

We may now introduce the following definition,

\begin{definition}
The Spread Value Adjustment process of a counterparty risky CDS contract maturing at time $T$ is defined as,
\begin{equation*}
\text{SVA}_{t}=\kappa _{t}-\kappa _{t}^{C}
\end{equation*}
for every $t\in \left[ 0,T\right] .$
\end{definition}

Monitoring SVA is of great importance since it provides a more practical way to quantify the counterparty risk. Moreover, the spread difference is a very useful indicator for the trading decisions in practice (cf. \cite{Gregory2009}).

\begin{proposition}
The SVA of a counterparty risky CDS contract maturing at time $T$ equals,
\begin{equation*}
\text{SVA}_{t}=\frac{\widetilde{\text{CVA}}_{t}}{B_{t}G^{-1}(t) {\mathbb{E}}
_{\mathbb{Q}}\left( \left. \int_{] t,T]
}B_{u}^{-1}G\left( u\right) du\right\vert \mathcal{F}_{{t}}\right) }
\end{equation*}
for every $t\in \left[ 0,T\right] ,$ where the pre-first-default bilateral Credit Valuation Adjustment process $\widetilde{\text{CVA}}$ is given as
\begin{equation}
\widetilde{\text{CVA}}_{t}=\widetilde S_{t}-\widetilde S_{t}^{C},
\label{CVAdef1}
\end{equation}
for every $t\in [0,T]$.
\end{proposition}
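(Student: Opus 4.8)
The plan is to read off the result from representations already in hand: the defining relations (\ref{sp1})--(\ref{sp2}) for the two fair spreads, together with the pre-default price formulas of Lemma \ref{nicerlemma}, using crucially that the running (contractual) spread enters both $\widetilde S_t$ and $\widetilde S_t^{C}$ through one and the same risky annuity.

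First I would make the contractual spread $\kappa$ explicit in both pre-default prices. Splitting off the $\kappa$-term in (\ref{Ife1}) and (\ref{Ife2}) and identifying it by means of (\ref{FLc}), one gets
\begin{equation*}
\widetilde S_t = PL_t-\kappa\, RDV01_t^{C},\qquad \widetilde S_t^{C} = PL_t^{C}-\kappa\, RDV01_t^{C},
\end{equation*}
where $RDV01_t^{C}=B_tG^{-1}(t)\,{\mathbb{E}}_{\mathbb{Q}}\left(\left.\int_{]t,T]}B_u^{-1}G(u)\,du\right\vert\mathcal{F}_t\right)$, and $PL_t$, $PL_t^{C}$ denote the leading $\mathcal{F}_t$-conditional expectations, formed from the $\hat\delta^i$'s and from the $\overline\delta^i$'s respectively. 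Subtracting, the annuity contributions cancel, so that $\widetilde{\text{CVA}}_t=\widetilde S_t-\widetilde S_t^{C}=PL_t-PL_t^{C}$; moreover, since $\hat\delta_t^i-\overline\delta_t^i=\xi_t^i$, this equals $B_tG^{-1}(t)\,{\mathbb{E}}_{\mathbb{Q}}\left(\left.\int_t^{T}B_u^{-1}G(u)\sum_{i=1}^{7}l_u^i\xi_u^i\,du\right\vert\mathcal{F}_t\right)$.

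Then I would read off the two fair spreads and subtract. Equation (\ref{sp1}) together with the representation of $\widetilde S_t$ above gives $\kappa_t=PL_t/RDV01_t^{C}$, and (\ref{sp2}) gives $\kappa_t^{C}=PL_t^{C}/RDV01_t^{C}$; hence
\begin{equation*}
\text{SVA}_t=\kappa_t-\kappa_t^{C}=\frac{PL_t-PL_t^{C}}{RDV01_t^{C}}=\frac{\widetilde{\text{CVA}}_t}{RDV01_t^{C}},
\end{equation*}
and substituting the expression for $RDV01_t^{C}$ yields exactly the stated formula. The one step that needs care is the cancellation above: one must check that the contractual spread contributes the \emph{same} annuity $RDV01_t^{C}$ to $\widetilde S_t$ and to $\widetilde S_t^{C}$, so that it drops out of the difference and so that $\kappa_t$ and $\kappa_t^{C}$ are ratios over a common denominator. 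For $\widetilde S_t^{C}$ this is immediate from (\ref{Ife1}) and (\ref{FLc}); for $\widetilde S_t$ it is exactly why one works from the first-default representation of the clean price in (\ref{Ife2}) (equivalently, from the corollary following Lemma \ref{ScRep}) rather than from its traditional $\tau_1$-based representation. Granting that common structure, the remainder is elementary algebra and presents no analytic obstacle.
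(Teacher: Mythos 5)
Your proof is correct and follows essentially the same route as the paper's: both arguments rest on the observation that, in the first-to-default representations (\ref{Ife1})--(\ref{Ife2}), the clean and the counterparty-risky contracts carry one and the same risky annuity $RDV01^{C}_{t}=B_{t}G^{-1}(t)\,\mathbb{E}_{\mathbb{Q}}\bigl(\int_{]t,T]}B_{u}^{-1}G(u)\,du \mid \mathcal{F}_{t}\bigr)$, so the spread term cancels in $\widetilde S_{t}-\widetilde S^{C}_{t}$ and $\kappa_{t}$, $\kappa^{C}_{t}$ become ratios over a common denominator. The paper organizes the identical algebra by adding and subtracting $\kappa_{t}RDV01^{C}_{t}$ inside $PL^{C}_{t}$ and invoking $\widetilde S_{t}(\kappa_{t})=0$, whereas you make the affine dependence on $\kappa$ explicit from the outset; the content is the same, and you correctly flag the one genuine subtlety, namely that the cancellation requires the first-default representation of the clean price rather than its traditional $\tau_{1}$-based form.
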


\begin{proof}
Let us rewrite $PL^{C}$ as
\begin{equation*}
PL_{t}^{C}=PL_{t}^{C}-\kappa _{t}RDV01_{t}^{C} +\kappa _{t}RDV01_{t}^{C}
\end{equation*}
by a simple modification. Now, using $\left( \ref{CVAdef}\right) $ and $\left( \ref{sp2}\right) $, we conclude that
\begin{align*}
PL_{t}^{C} &=\widetilde S_{t}^{C}(\kappa _{t}) + \kappa _{t}RDV01_{t}^{C} \\
&=\widetilde S_{t}(\kappa _{t}) - \widetilde{\text{CVA}}_{t} + \kappa _{t}RDV01_{t}^{C} .
\end{align*}
Since $\widetilde S_{t}(\kappa _{t}) =0$, then $\kappa _{t}^{C}$ has the following form,
\begin{equation*}
\kappa _{t}^{C}=\frac{-\widetilde{\text{CVA}}_{t}+\kappa _{t}RDV01_{t}^{C} }{RDV01_{t}^{C} },
\end{equation*}
which is
\begin{equation*}
\kappa _{t}^{C}=-\frac{\widetilde{\text{CVA}}_{t}}{RDV01_{t}^{C}}+\kappa _{t}.
\end{equation*}
\end{proof}

\subsubsection{SVA Dynamics}

Applying It\^{o} formula one obtains the dynamics of the fair spread process and of the counterparty risk adjusted spread process as

\begin{align}
d\kappa _{t} &=\frac{1}{\widetilde{RDV01}_{t}}\left( B_{t}^{-1}G^{1}\left(
t\right) \left( \kappa _{t}-\delta _{{t}}^{1}l_{t}^{1}\right) dt+\frac{
\kappa _{t}}{\widetilde{RDV01}_{t}}d\left\langle \eta ^{2}\right\rangle _{t} \right. \label{dkt} \\
& \ \ \ \left. -\frac{1}{\widetilde{RDV01}_{t}}d\left\langle \eta ^{1},\eta ^{2}\right\rangle
_{t}\right) +\frac{1}{\widetilde{RDV01}_{t}}\left( d\eta _{t}^{1}-\kappa _{t}d\eta
_{t}^{2}\right) , \qquad \qquad\qquad t\in[0,T], \notag
\end{align}
where
\begin{equation*}
\widetilde{RDV01}_{t} :={\mathbb{E}}_{\mathbb{Q}}\left( \left. \int_{]
t,T] }B_{u}^{-1}G^{1}\left( u\right) du\right\vert
\mathcal{F}_{{t}}\right) ,
\end{equation*}

\begin{equation*}
\eta _{t}^{1} :={\mathbb{E}}_{\mathbb{Q}}\left( \left. \int_{\left] 0,T\right]
}B_{{u}}^{-1}G^{1}\left( u\right) \delta _{{u}}^{1}l_{u}^{1}du\right\vert
\mathcal{F}_{{t}}\right) ,
\end{equation*}
\begin{equation*}
\eta _{t}^{2}={\mathbb{E}}_{\mathbb{Q}}\left( \left. \int_{\left] 0,T\right]
}B_{u}^{-1}G^{1}\left( u\right) du\right\vert \mathcal{F}_{{t}}\right) =
\widetilde{RDV01}_{t}+\int_{\left] 0,t\right] }B_{u}^{-1}G^{1}\left( u\right)
du,
\end{equation*}
and
\begin{align}
d\kappa _{t}^{C} &=\frac{1}{\widetilde{RDV01}_{t}^{C}}\left(
B_{t}^{-1}G(t) \left( \kappa _{t}^{C}-\sum_{i=1}^{7}\widetilde{
\delta }_{{t}}^{i}l_{t}^{i}\right) dt+\frac{\kappa _{t}^{C}}{\widetilde{RDV01}
_{t}^{C}}d\left\langle \zeta ^{2}\right\rangle _{t} \right. \label{dkct} \\
& \ \ \ \left. -\frac{1}{\widetilde{RDV01}_{t}^{C}}d\left\langle \zeta ^{1},\zeta ^{2}\right\rangle _{t}\right)+\frac{1}{\widetilde{RDV01}_{t}^{C}}\left( d\zeta _{t}^{1}-\kappa
_{t}^{C}d\zeta _{t}^{2}\right) ,  \notag
\end{align}
where
\begin{equation*}
\widetilde{RDV01}_{t}={\mathbb{E}}_{\mathbb{Q}}\left( \left. \int_{]
t,T] }B_{u}^{-1}G^{1}\left( u\right) du\right\vert
\mathcal{F}_{{t}}\right) ,
\end{equation*}
with
\begin{equation*}
\zeta _{t}^{1}={\mathbb{E}}_{\mathbb{Q}}\left( \left. \int_{\left] 0,T\right]
}B_{{u}}^{-1}G\left( u\right) \left( \sum_{i=1}^{7}l_{u}^{i}\overline{\delta
}_{u}^{i}\right) du\right\vert \mathcal{F}_{{t}}\right),
\end{equation*}
and
\begin{equation*}
\zeta _{t}^{2}={\mathbb{E}}_{\mathbb{Q}}\left( \left. \int_{\left] 0,T\right]
}B_{u}^{-1}G\left( u\right) du\right\vert \mathcal{F}_{{t}}\right) =%
\widetilde{RDV01}_{t}^{C}+\int_{\left] 0,t\right] }B_{u}^{-1}G\left( u\right)
du.
\end{equation*}

Combining the above results, we find the dynamics of the SVA process:
\begin{equation*}
d\, \textrm{SVA}_{t}=d\kappa _{t}-d\kappa _{t}^{C}, \quad t\in[0,T].
\end{equation*}

Dynamics of the SVA is of great importance for observing the behavior of the difference between the fair spread and the counterparty risk adjusted spread. Counterparty risk dynamics can be assessed in a more intuitive manner by computing the SVA dynamics.

\section{Multivariate Markovian Default Model}\label{sec:Numerics}
In this section, we propose an underlying stochastic model following the lines of \cite{Bielecki2011}. Towards this end we define a Markovian model of multivariate default times with factor processes $X=\left(X^{1},X^{2},X^{3}\right) $ which will have the following key features,

\begin{itemize}
\item The pair $\left( X,H\right) $ is Markov in its natural filtration,

\item Each pair $\left( X^{i},H^{i}\right) $ is a Markov process,

\item At every instant, either each counterparty defaults individually or simultaneously with other counterparties.
\end{itemize}

Note that the second property grants quick valuation of the CDS and independent calibration of each model marginal $\left( X^{i},H^{i}\right)$, whereas the third property will allow us to account for dependence between defaults. We present here some numerical results as an application of above theory. The default intensities are assumed to be of the affine form
\begin{equation*}
l_{i}\left( t,X_{t}^{i}\right) =a_{i}+X_{t}^{i},
\end{equation*}
where $a_{i}$ is a constant and $X^{i}$ is a homogenous CIR process generated by,
\begin{equation*}
dX_{t}^{i}=\zeta _{i}\left( \mu _{i}-X_{t}^{i}\right) dt-\sigma _{i}\sqrt{
X_{t}^{i}}dW_{t}^{i},
\end{equation*}
for $i=1,2,3$. Each collection of the parameters $\left( \zeta _{i},\mu_{i},\sigma _{i}\right) $ may take values corresponding to a low, a medium or a high regime which are given as follows.

\begin{table}[h] \centering
{\begin{tabular}{|l||l|l|l|l|} \hline
Credit Risk Level & $\zeta $ & $\mu $ & $\sigma $ & $X_{0}$ \\ \hline
Low & 0.9 & 0.001 & 0.01 & 0.001 \\ \hline
Medium & 0.8 & 0.02 & 0.1 & 0.02 \\ \hline
High & 0.5 & 0.05 & 0.2 & 0.05 \\ \hline
\end{tabular}
}
\label{Table0}

\end{table}

Moreover, following the methodology in \cite{Bielecki2011}, we specify the marginal default intensity processes as follows
\begin{equation*}
q_{t}^{1}=l_{t}^{1}+l_{t}^{5}+l_{t}^{6}+l_{t}^{7},\text{ }
q_{t}^{2}=l_{t}^{2}+l_{t}^{4}+l_{t}^{5}+l_{t}^{7},\text{ }
q_{t}^{3}=l_{t}^{3}+l_{t}^{4}+l_{t}^{6}+l_{t}^{7}
\end{equation*}
where the related survival probabilities are found as
\begin{equation*}
\mathbb{Q}\left( {\tau }_{{i}}>t\right) ={\mathbb{E}}_{\mathbb{Q}}\left(e^{-\int_{0}^{t}q_{u}^{i}du}\right) \text
{ and }
\mathbb{Q}\left( {\tau }>t\right) ={\mathbb{E}}_{\mathbb{Q}}\left( e^{-\int_{0}^{t}l_{u}du}\right) .
\end{equation*}
For a detailed discussion including implementation and the calibration of the model, we refer to \cite{Bielecki2011} and \cite{Assefa2011}.

\subsection{Results}

 Our aim here is to assess by means of numerical experiments the impact of collateralization on the counterparty risk exposure. We present numerical results for different collateralization regimes distinguished by different threshold values. The numerical experiments below have been done using the three factor (2F)\ parametrization given in \cite{Bielecki2011}, the recovery rates are fixed to $40\%$, the risk-free rate $r$ is taken as $0$ and the maturity is set to $T=5$ years.

 Table \ref{Table1} shows the values of CVA$_0$ and SVA$_0$ for different threshold regimes. Threshold values are chosen as a fraction of the notional (cf. \cite{Pykhtin2009}). Computations are done assuming that (refer to Table \ref{Table0}) the underlying entity, the counterparty, and the investor has high risk levels. Simulated fair spread without counterparty risk is found as 153bps.  Case A represents the uncollateralized regime where there is no collateral exchanged (this is done by setting the thresholds infinity), whereas other Case F corresponds to the full collateralization where the thresholds are set to $0$. In each case, computations are done by setting $MTA$ to zero and assuming there is no margin period. One can observe that decreasing threshold value dramatically decreases the initial CVA and therefore the SVA values.

\begin{table}[h] \centering
{\begin{tabular}{l|l|l|l|l|}
\cline{2-5}
 & $\Gamma _{cpty}$ & $\Gamma _{inv}$ & CVA$_{0}$ & SVA$_{0}$\\ \cline{2-5}
Case A & $\infty $ & -$\infty $ & $1.01\times 10^{-4}$ & $0.2153$  \\ \cline{2-5}
Case B & $1.5\times 10^{-3}$ & $0.4\times 10^{-3}$ & $6.13\times 10^{-5}$ & $0.1305$\\ \cline{2-5}
Case C & $1\times 10^{-3}$ & $0.2\times 10^{-3} $& $4.36\times 10^{-5}$ & $0.0931$ \\ \cline{2-5}
Case D & $0.5\times 10^{-3}$ & $0.1\times 10^{-3}$ & $2.18\times 10^{-5}$ & $0.0464$\\ \cline{2-5}
Case E & $0.25\times 10^{-3}$ & $0.05\times 10^{-3}$ & $1.14\times 10^{-5}$ & $0.0243$\\ \cline{2-5}
Case F & $0$ & $0$ & $0$ & $0$ \\ \cline{2-5}
\end{tabular}
}

\label{Table1}
\end{table}

In Figure \ref{fig1}, we present the $EPE$ and $ENE$ curves for each case A to F, and we also plot the mean collateral values. Computations are carried out by running 10$^{4}$ Monte Carlo simulations. It is apparent that the behavior of the $EPE$ and $ENE$ values decreases as a result of increased collateralization. Note that there are peaks in the collateral value in the very beginning and through the maturity. This effect can be explained as follows: Observe from Table 1 that the investor has lower threshold than the counterparty in each cases from A to F. As a result, having a lower threshold value, investor will be posting collateral before the counterparty. Therefore, until the counterparty's exposure reaches the threshold, the collateral value remains negative; meaning that there will be margin calls for the investor before the counterparty.

Figure \ref{fig2} plots the mean of sample CVA paths. Starting from CVA$_0$ we compute the mean sample paths in each case. The behavior of CVA as a credit hybrid option, as indicated in Remark \ref{CVAoption}, can be clearly observed in the graphs. CVA values decrease over time as a result of time decay since the expected loss decreases close to the expiration. The effect of collateralization on the CVA values is apparent in the graphs. Observe that increased initial threshold values are of great importance since one can significantly reduce the future CVA values by appropriately setting the collateral thresholds. Moreover, one can also use dynamic thresholds by linking the threshold values to the counterparties' default intensities or credit ratings. In this way, counterparties will have more control on the future values of the CVA of the CDS contract and dynamically manage the CVA since the collateral thresholds will be reacting to the changes in the default intensities or credit ratings. This approach will be further investigated in a future research.

\begin{figure}[h]
\centering
\begin{tabular}{cc}
\epsfig{file=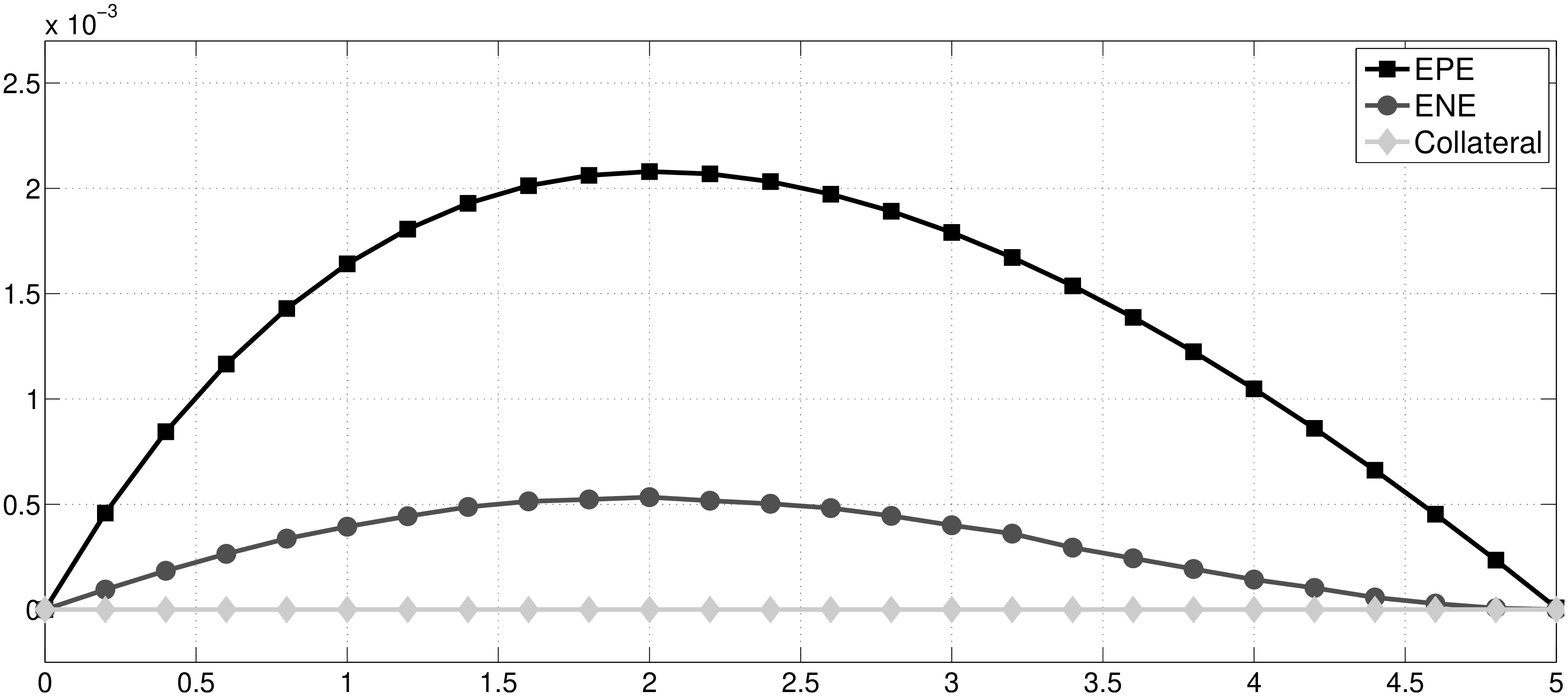,width=0.46\linewidth,clip=} &
\epsfig{file=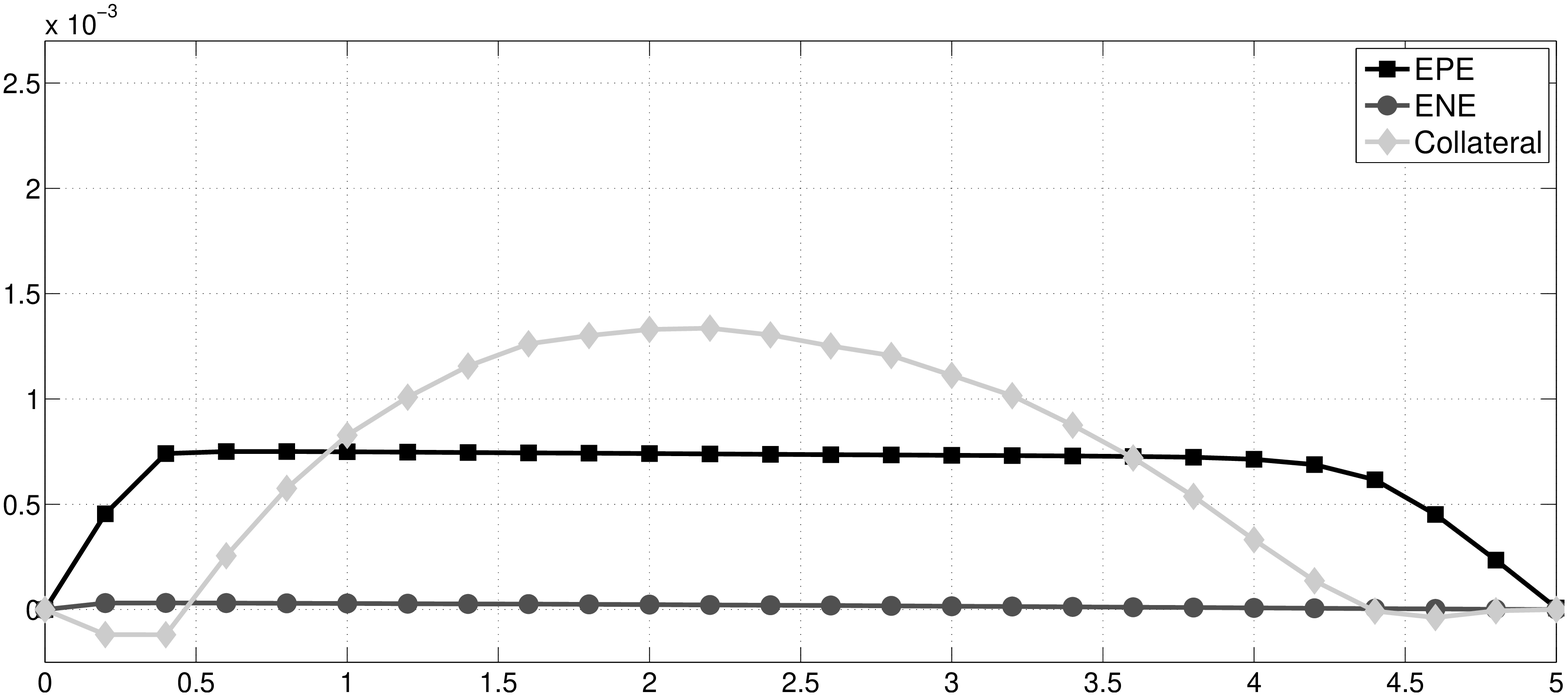,width=0.46\linewidth,clip=} \\
\epsfig{file=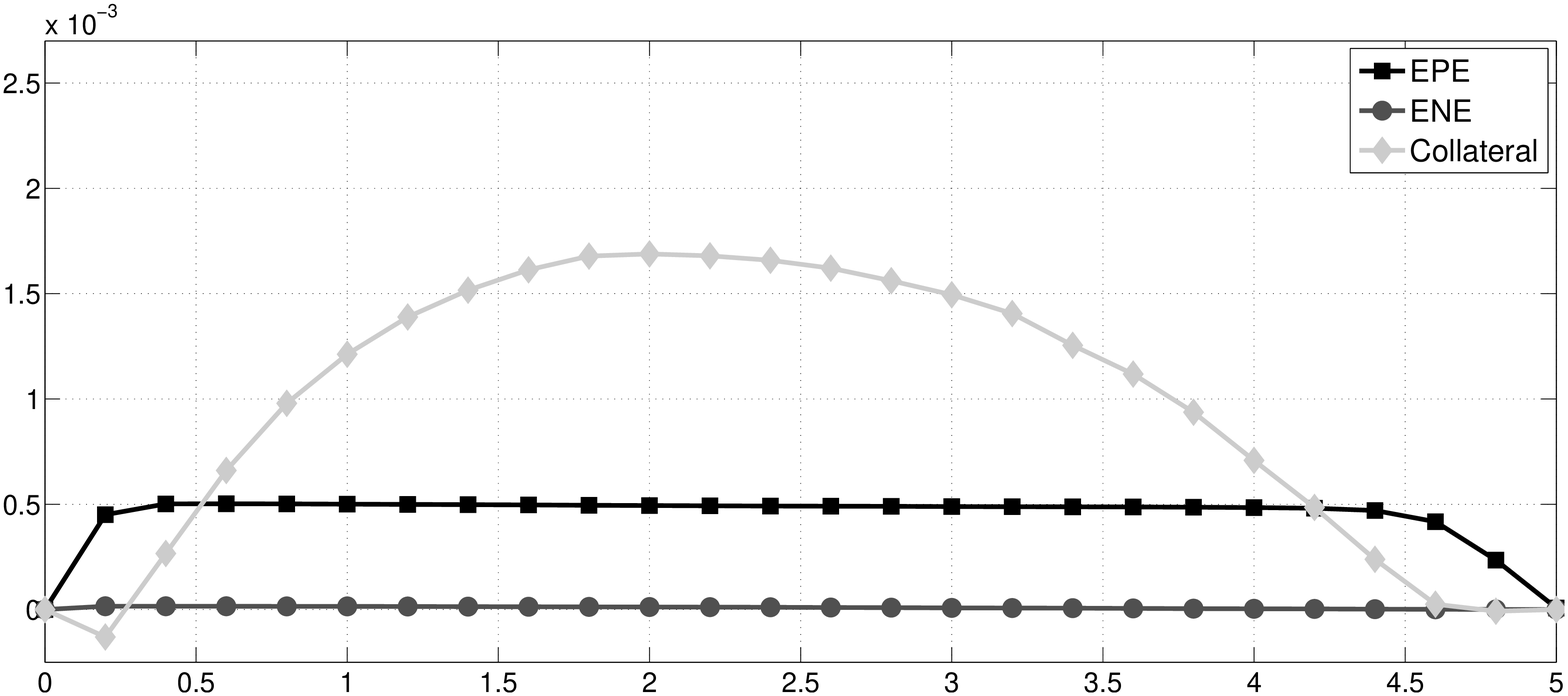,width=0.46\linewidth,clip=} &
\epsfig{file=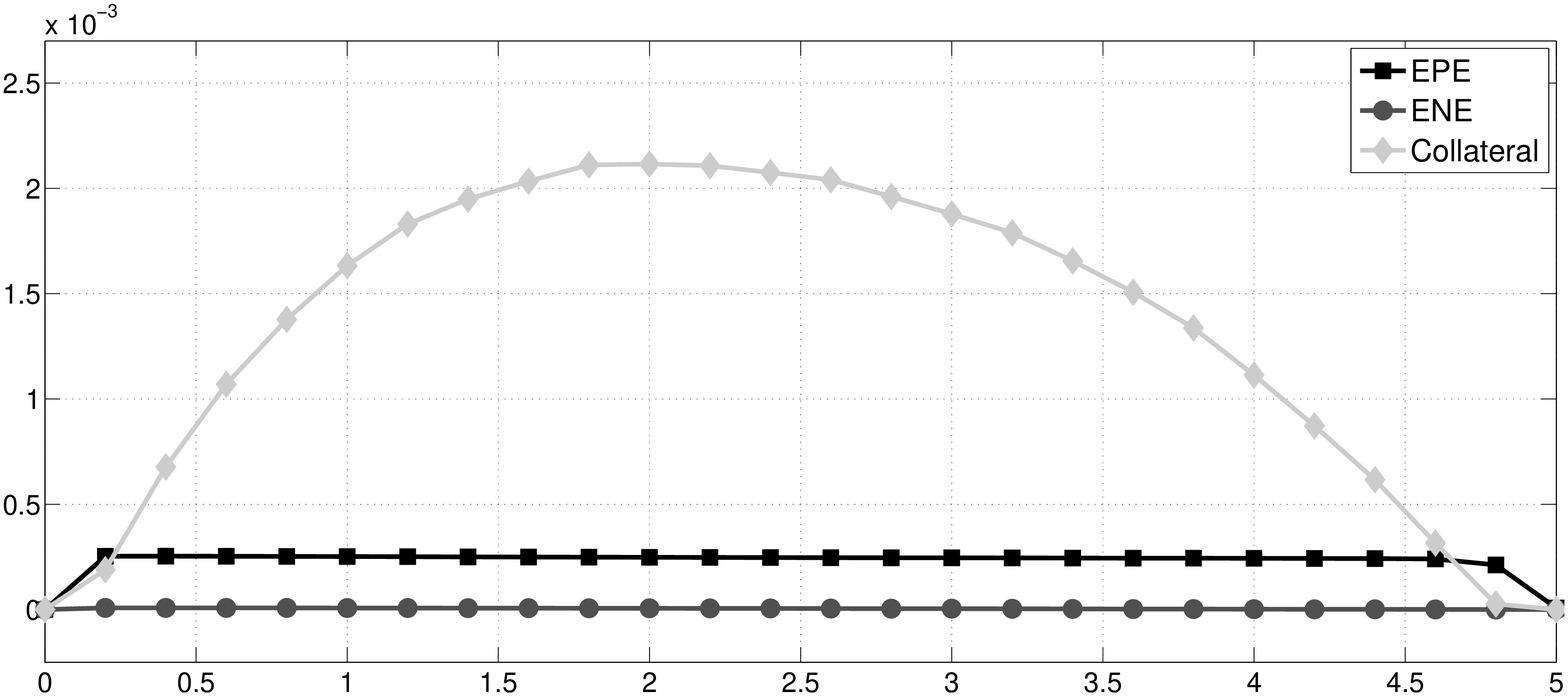,width=0.46\linewidth,clip=} \\
\epsfig{file=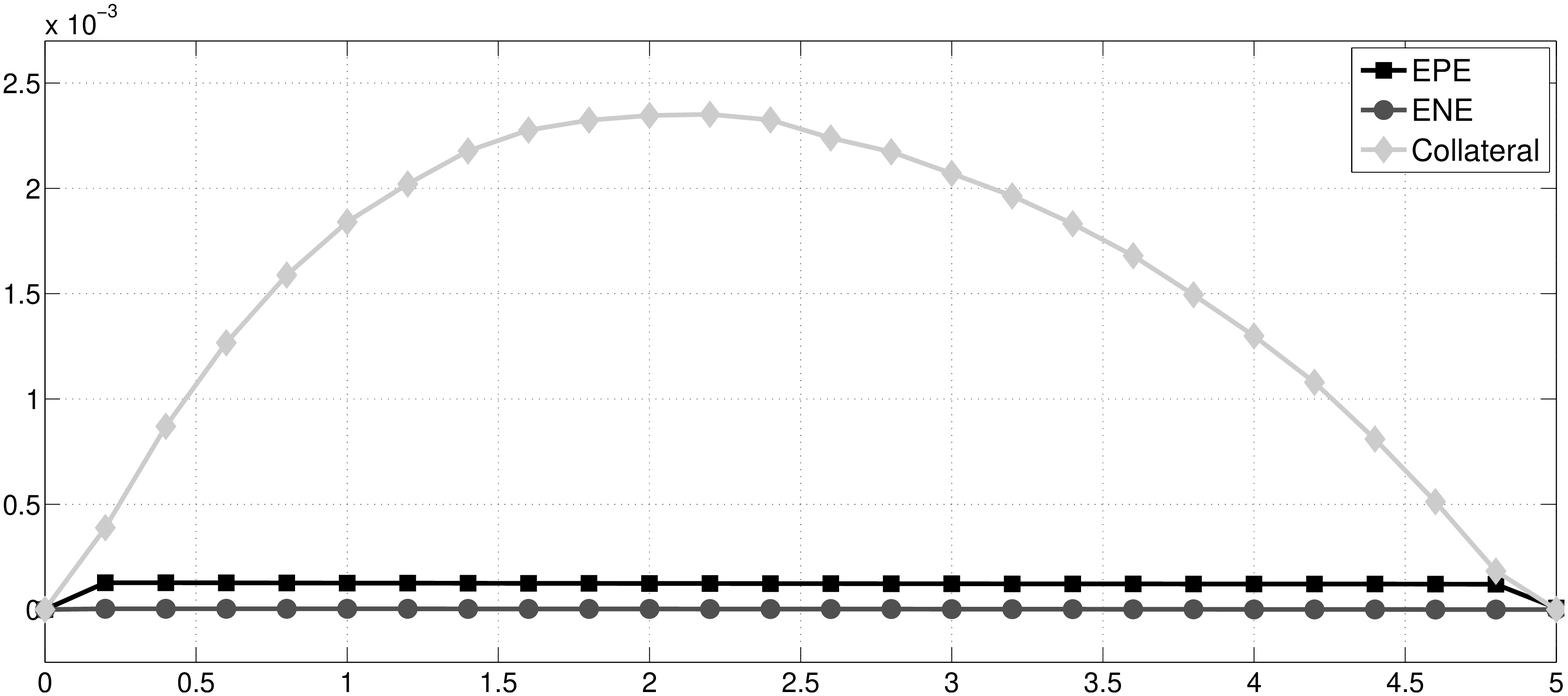,width=0.46\linewidth,clip=} &
\epsfig{file=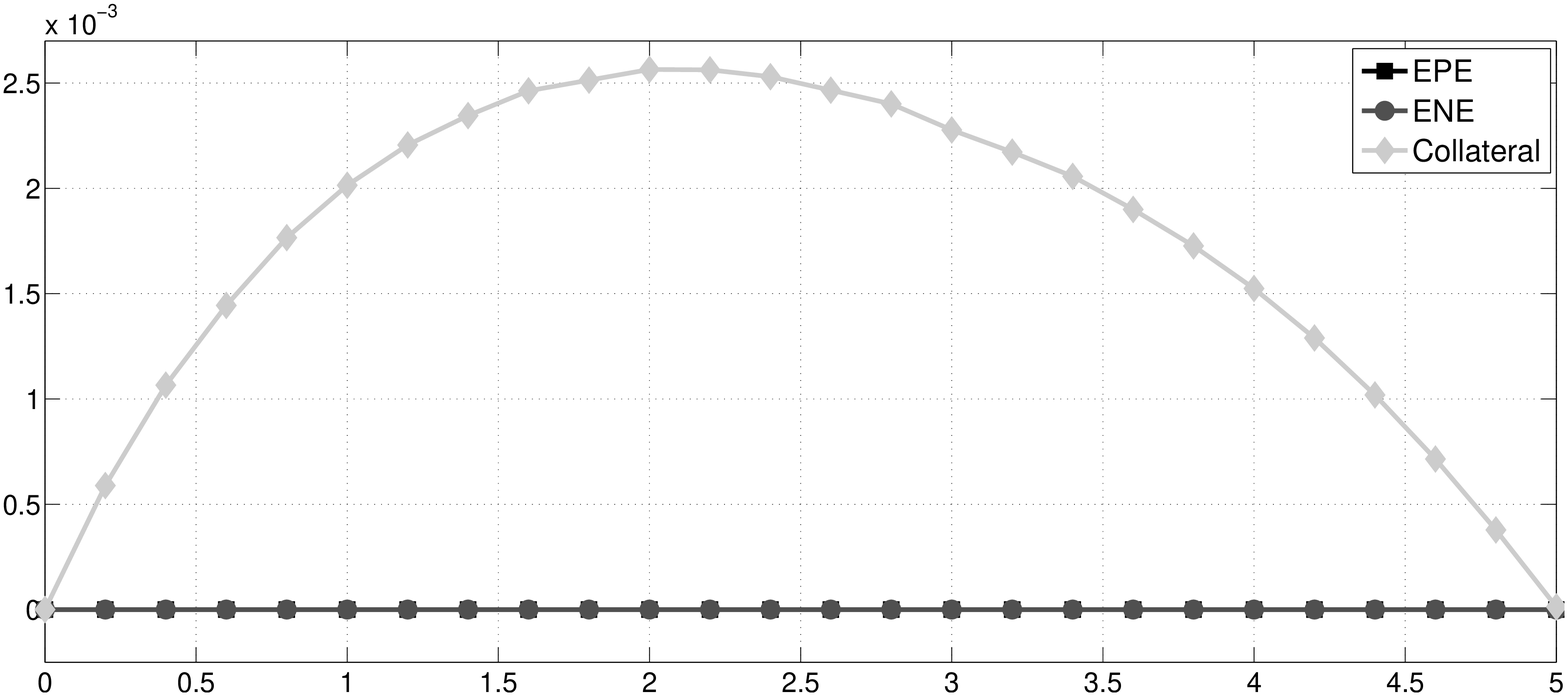,width=0.46\linewidth,clip=}
\end{tabular}
\caption{EPE, ENE and the Collateral curves for Case A, B, C, D, E, and F}
\label{fig1}
\end{figure}

\begin{figure}[h]
\centering
\begin{tabular}{cc}
\epsfig{file=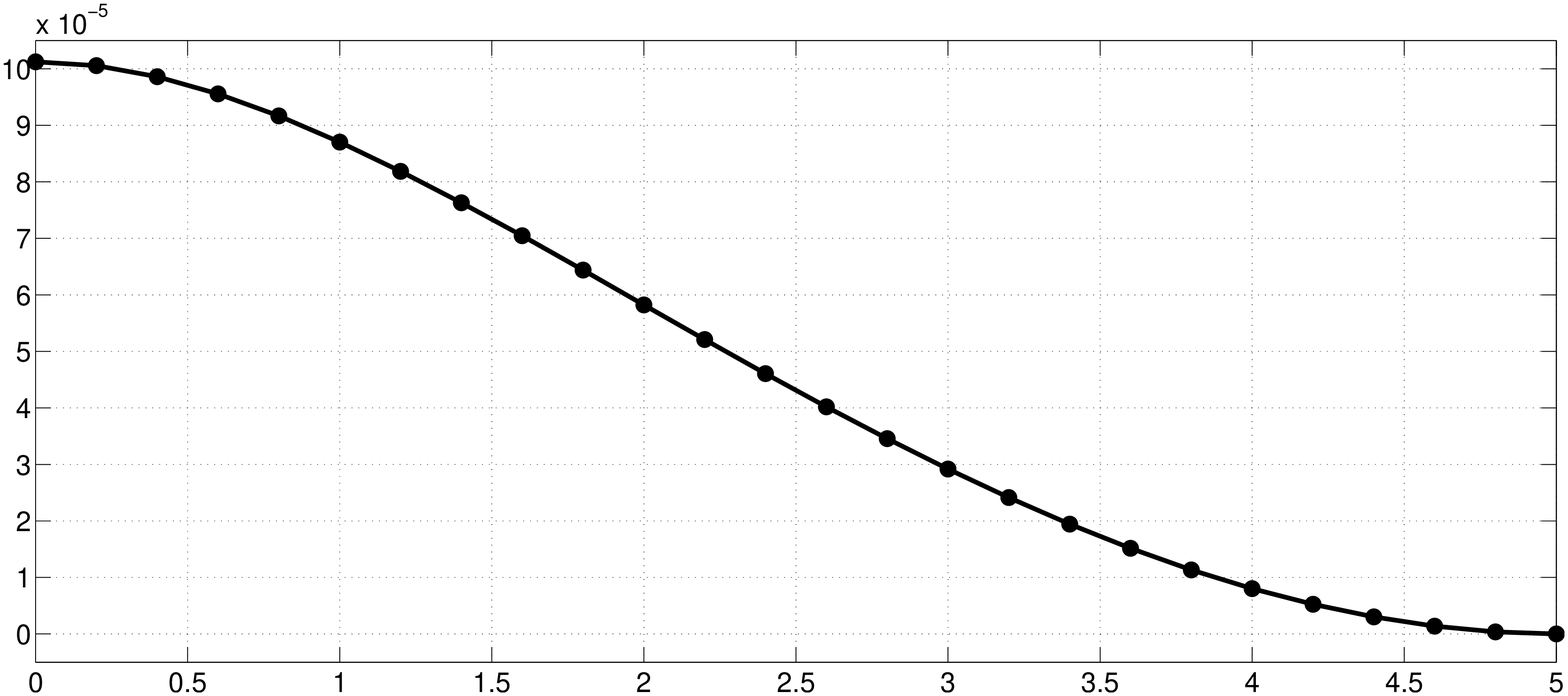,width=0.46\linewidth,clip=} &
\epsfig{file=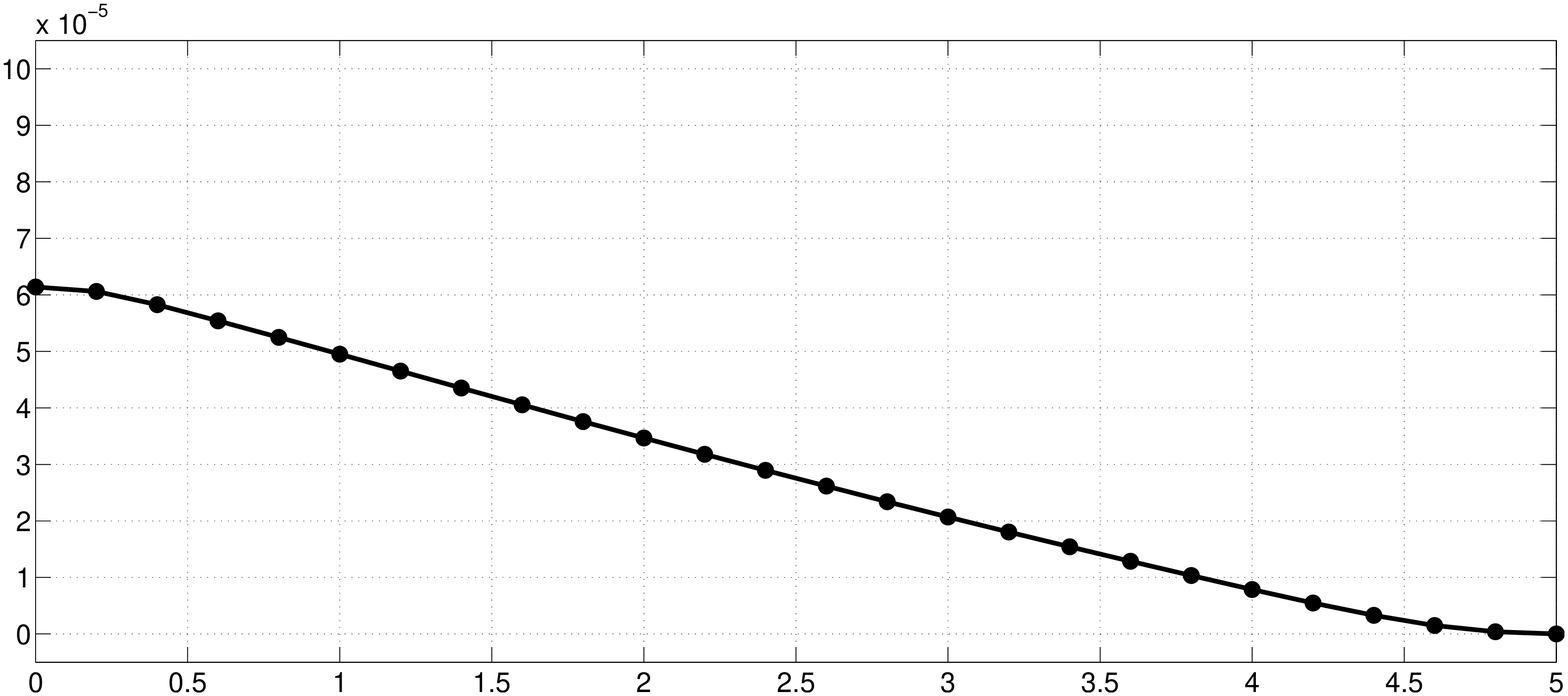,width=0.46\linewidth,clip=} \\
\epsfig{file=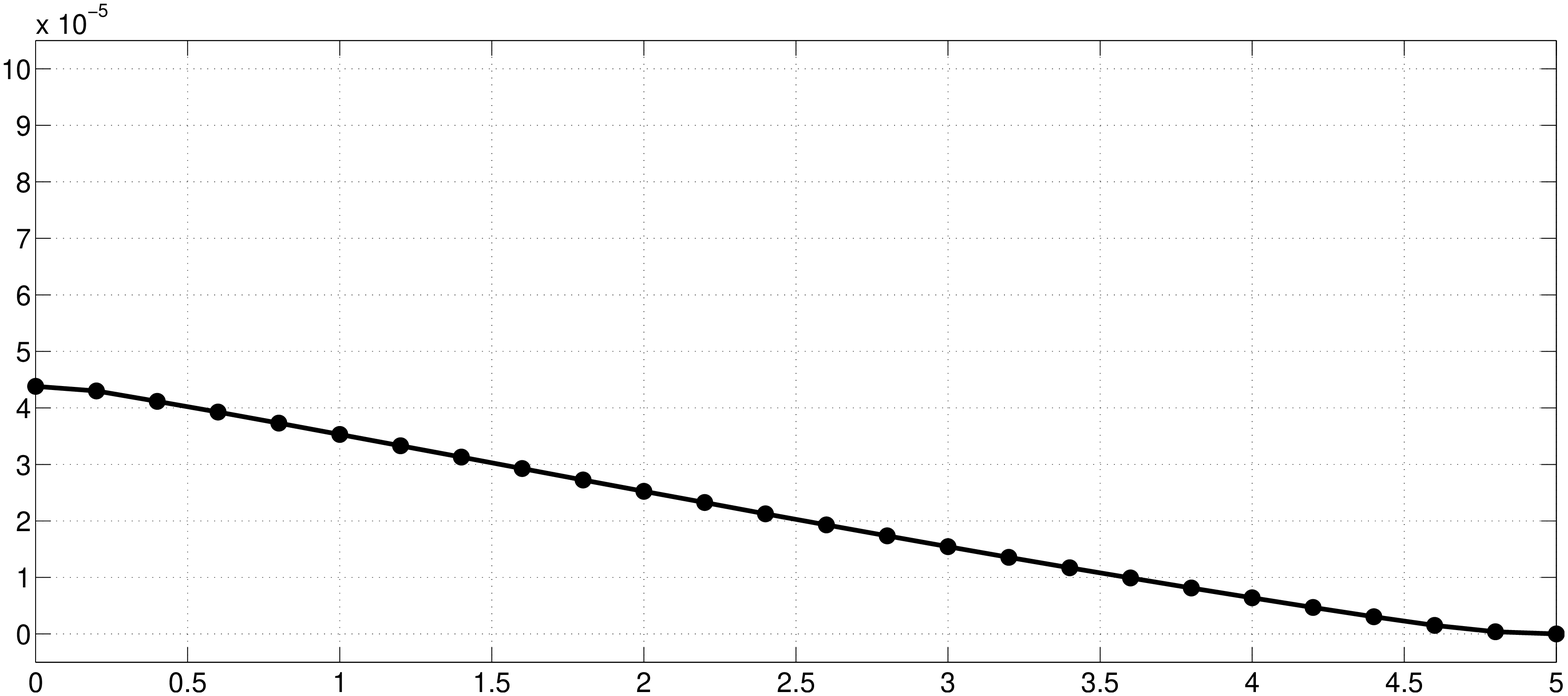,width=0.46\linewidth,clip=} &
\epsfig{file=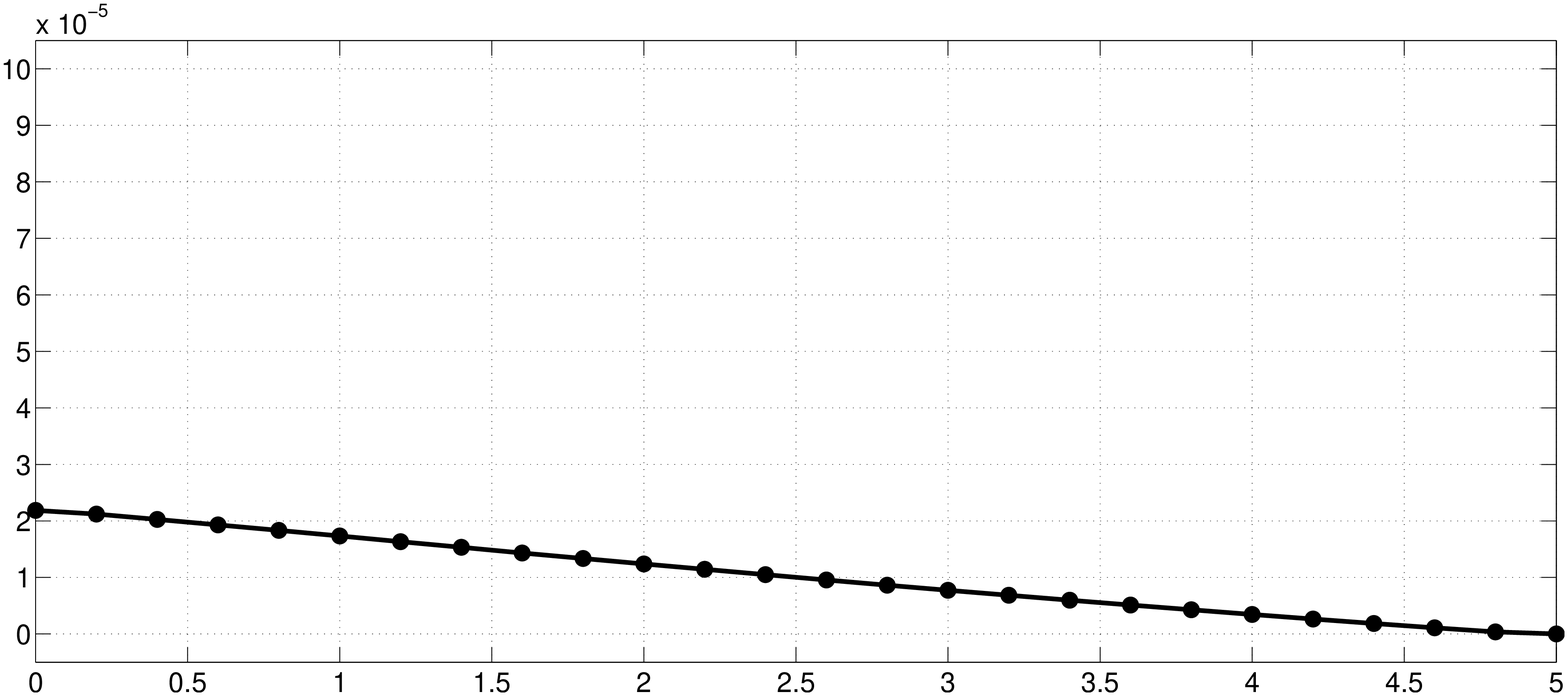,width=0.46\linewidth,clip=} \\
\epsfig{file=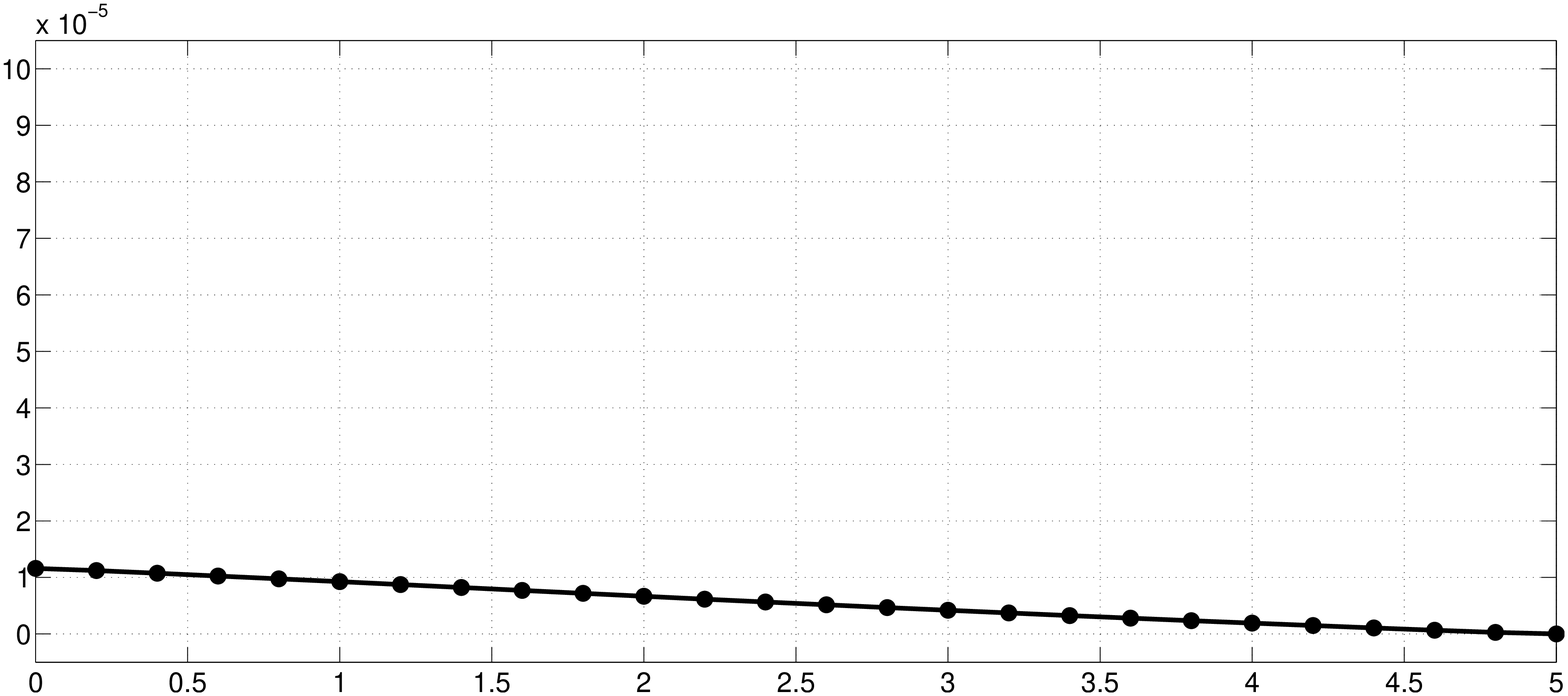,width=0.46\linewidth,clip=} &
\epsfig{file=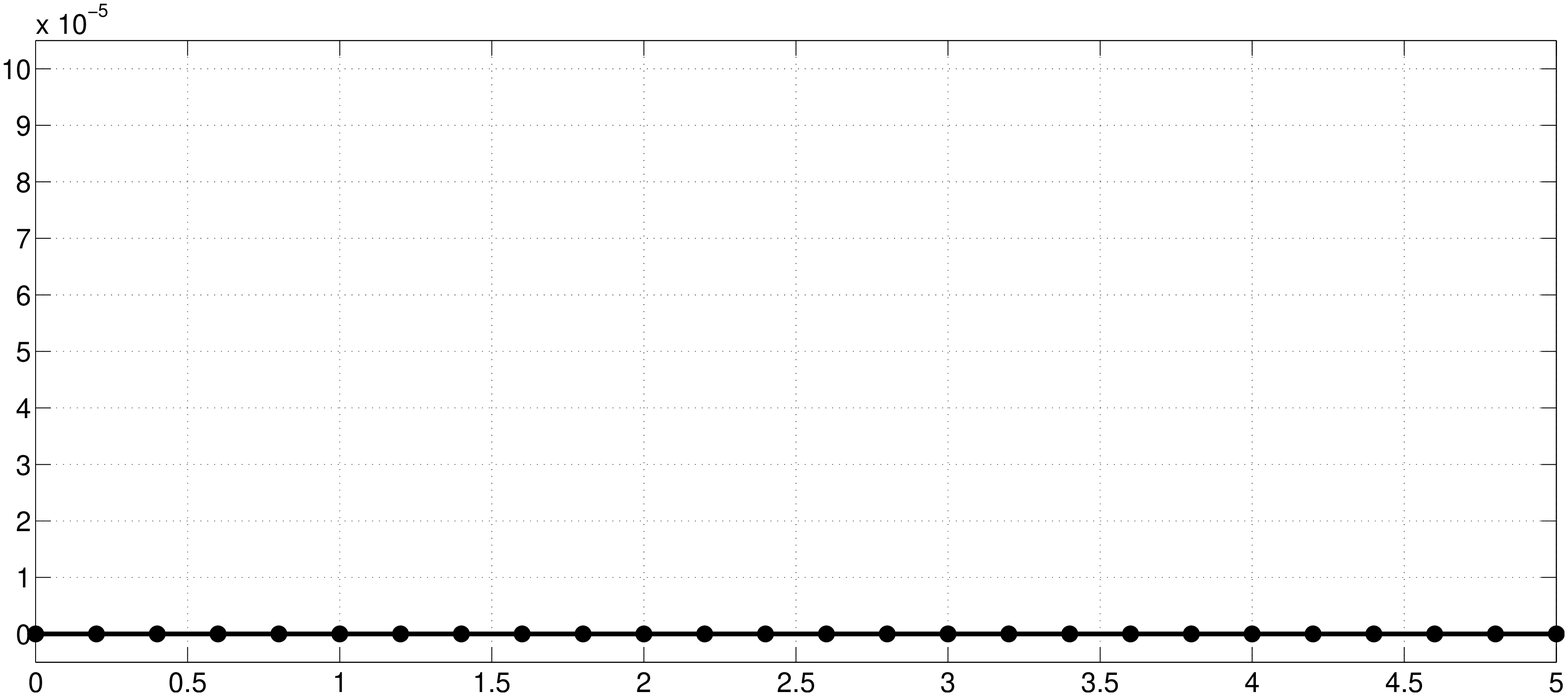,width=0.46\linewidth,clip=}
\end{tabular}
\caption{Forward CVA curves for Case A, B, C, D, E, and F}
\label{fig2}
\end{figure}

\section{Conclusion}

In this paper, we discussed the modeling of counterparty risk in the presence of bilateral margin agreements. We defined an appropriate collateral process which takes various margin agreement parameters into account. The dynamics of the counterparty risk adjustment, CVA, has been found for the bilateral case. This achievement helps us to better understand and monitor the behavior of the bilateral CVA as well as the unilateral CVA and the DVA.

We observed the impact of collateral agreements on counterparty risk adjustments as well as the credit exposures such as the EPE and the ENE. The presence of simultaneous defaults in our model represents the wrong way risk involved in the CDS contracts. We formulate the fair spread value adjustment, which is named as SVA, that indicates the additional spread value to incorporate the counterparty risk into the fair spread value. Moreover, we derive the dynamics of the fair spread and the counterparty risky spread and therefore the spread value adjustment, SVA. Finally, as in \cite{Bielecki2011} and \cite{Assefa2011}, we present our numerical results using a Markovian model of counterparty credit risk.

\bibliographystyle{alpha}

\end{document}